
\documentclass[12pt,reqno]{amsart}
\usepackage{amsmath}
\usepackage{latexsym}
\usepackage{amsfonts}
\usepackage{amssymb}
\usepackage{color}
\usepackage{bbm,dsfont}
\usepackage{graphicx}


\newtheorem{proposition}{Proposition}

\newtheorem{lemma}{Lemma}
\newtheorem{corollary}{Corollary}

\theoremstyle{definition}

\newtheorem{example}{Example}


\hyphenation{Gaussian}


\newcommand{\real}{\mathbb R} 
\newcommand{\complex}{\mathbb C} 
\newcommand{\half}{\tfrac{1}{2}} 

\newcommand{\ip}[2]{\left\langle\,#1\,|\,#2\,\right\rangle} 
\newcommand{\ket}[1]{|#1\rangle} 
\newcommand{\no}[1]{\left\|#1\right\|} 

\newcommand{\mnr}{M_n(\real)}
\newcommand{\mnnr}{M_{2n}(\real)}
\newcommand{\mnnc}{M_{2n}(\complex)}
\newcommand{\Symp}{Sp(2n)}
\newcommand{\symp}{\mathfrak{sp}(2n)}
\newcommand{\SO}{SO(2n)}
\renewcommand{\exp}{\mathrm{exp}}
\newcommand{\diag}{\mathrm{diag}} 
\newcommand{\id}{\mathbbm{1}} 
\newcommand{\nul}{0} 

\newcommand{\gauss}{\mathcal{G}}
\newcommand{\gauge}{\mathcal{G}_{\sigma}}


\begin{document}

\title[The semigroup structure of Gaussian channels]{The semigroup structure of Gaussian channels}

\author[Heinosaari]{Teiko Heinosaari}
\address{Teiko Heinosaari, Niels Bohr Institute, Blegdamsvej 17, 2100 Copenhagen, Denmark}
\email{heinosaari@nbi.dk}

\author[Holevo]{Alexander S. Holevo}
\address{Alexander S. Holevo, Steklov Mathematical Institute, Gubkina 8, 119991 Moscow, Russia}
\email{holevo@mi.ras.ru}

\author[Wolf]{Michael M. Wolf}
\address{Michael M. Wolf, Niels Bohr Institute, Blegdamsvej 17, 2100 Copenhagen, Denmark}
\email{wolf@nbi.dk}

\begin{abstract} \ We investigate the semigroup structure of bosonic Gaussian quantum channels.
Particular focus lies on the sets of channels which are divisible,
idempotent or Markovian (in the sense of either belonging to
one-parameter semigroups or being infinitesimal divisible). We show
that the non-compactness of the set of Gaussian channels allows for
remarkable differences when comparing the semigroup structure with
that of finite dimensional quantum channels. For instance, every
irreversible Gaussian channel is shown to be divisible in spite of
the existence of Gaussian channels which are not infinitesimal
divisible. A simpler and known
consequence of non-compactness is the lack of generators for certain
reversible channels. Along the way we provide new representations
for classes of Gaussian channels: as matrix semigroup, complex
valued positive matrices or in terms of a simple form describing
almost all one-parameter semigroups.
\end{abstract}

\maketitle

\section{Introduction}\label{sec:intro}

A \emph{quantum channel} describes the input-output relation of a
quantum mechanical operation. Mathematically, it is described by a
completely positive map which is trace-preserving (in the
Schr\"odinger picture) or identity-preserving (in the Heisenberg
picture). Considering equal input and output spaces we can
concatenate quantum channels with themselves or other channels.
Clearly, such a concatenation is again a valid quantum operation, so
that the set of quantum channels forms a semigroup. Inverses only
belong to this semigroup if the channels describe unitary evolution.
Apart from this subgroup of reversible channels various subsets can
be distinguished according to their semigroup properties; for
instance (i) channels which are elements of one-parameter
semigroups, i.e., physically speaking solutions of time-independent
Markovian master equations, (ii) channels which are infinitesimal
divisible, e.g., solutions of time-dependent Markovian master
equations, (iii) channels
which are divisible into others in a non-trivial way, (iv) channels
which are indivisible, and (v) channels which are idempotent.

Whereas the characterization of one-parameter semigroups goes mainly
back to the seventies \cite{GoKoSu76, Lindblad76} the distinction of
the above sets has  been addressed more recently in \cite{WoCi08}
and methods to decide membership in (i) have been provided in
\cite{WoEiCuCi08}. The classical counterpart known as
imbedding problem for Markov chains was exhaustively studied in
probability theory (see \cite{Jo74} and the references therein.)

The characterization of the above sets seems to become vastly more
complex as the dimension $d$ of the considered system is increased:
whereas for qubits ($d=2$) essentially everything is known, the
problem of deciding membership in (i) turns out to be NP-hard with
increasing $d$ \cite{CuEiWo09}.

The present article is devoted to the study of the semigroup
properties of \emph{bosonic Gaussian channels} (i.e., `quasi-free'
maps)---a class where the underlying Hilbert space is infinite
dimensional. The restriction to Gaussian channels is motivated by
their physical relevance (they model, for instance, optical fibers
and occur all along with quadratic interactions) and it is suggested
by the mentioned complexity issue. Note that this restriction has
two flavors: when we ask whether a channel is in one of the above
sets we do not only restrict the channel under consideration, we
also restrict the involved one-parameter semigroups or
factorizations to within the Gaussian setting. In this way we can
escape from infinite dimensional Hilbert space into finite
dimensional phase space and formulate everything in terms of finite
dimensional matrix analysis.

 At this point one might expect that the basic picture of the set
of finite dimensional quantum channels carries over to the Gaussian
setting. There are, however, crucial differences, for instance:
whereas the set of quantum channels in finite dimensions is compact,
the set of Gaussian channels (even though having a finite
dimensional parametrization) is not; every irreversible Gaussian
channel turns out to be divisible in spite of the existence of
Gaussian channels which are not infinitesimal divisible etc.

The article is organized as follows: Sec.\ref{sec:channels}
introduces the basic notation and Sec.\ref{sec:product} shows that
the set of Gaussian channels is indeed a matrix semigroup. In
Sec.\ref{sec:reversible}, which is mainly provided for completeness,
we review the reversible case and emphasize the fact that not every
canonical transformation has a generating quadratic Hamiltonian.
Sec.\ref{sec:one-param} deals with one-parameter semigroups, for
which a simple representation is provided, and Sec.\ref{sec:infdiv}
has a closer look at infinitesimal divisible channels. In
Sec.\ref{sec:divisible} we prove that all irreversible Gaussian
channels are divisible by exploiting a simple mapping from the set
of Gaussian channels to the cone of complex positive matrices.
Finally, in Secs.\ref{sec:idempotent},\ref{sec:gauge} idempotent and
gauge-covariant channels are investigated before Sec.\ref{sec:open}
concludes with some open questions.

\section{Gaussian channels}\label{sec:channels}

Let $Q_j,P_j$, $j=1,\ldots,n$, be the canonical operators satisfying
the the canonical commutation relations ($\hbar=1$)
\begin{equation*}
[Q_j,P_k]=i\delta_{jk} \, ,  \quad [Q_j,Q_k]=[P_j,P_k]=0 \, .
\end{equation*}
We will use a notation $R=(Q_1,P_1,\ldots,Q_n,P_n)^T$ and for each
$\xi\in\mathbb{R}^{2n}$, we define $W_\xi=e^{i\xi^T\sigma R}$. Here
we have denoted
\begin{equation*}
\sigma \equiv \sigma_n = \oplus_{i=1}^n \sigma_1 \, , \quad \sigma_1
= \left(\begin{array}{cc}0 & 1 \\-1 & 0\end{array}\right) \, .
\end{equation*}
The unitary operators $W_\xi$ are called \emph{Weyl operators} and
they correspond to displacements in phase space.

A \emph{Gaussian channel} is a quantum channel which maps Gaussian
states into Gaussian states \cite{HoWe01,EiWo07,CaEiGiHo08}. The
mathematical form of Gaussian channels is best described in the
Heisenberg picture when we look at their action on the Weyl
operators. A Gaussian channel then corresponds to a mapping
\begin{equation*}
W_\xi \mapsto W_{X^T \xi} \ e^{-\half \xi^T Y\xi} \, ,
\end{equation*}
where $X,Y$ are real $2n\times 2n$-matrices. Here and
thereafter without loss of generality we restrict to channels which
map zero-mean states into zero-mean states. Complete positivity
(cp) imposes a constraint on the matrices $X,Y$, which can be
written as
\begin{equation}\label{eq:XYconstraint}
 Y \geq i\left(\sigma-X\sigma X^T\right) \, .
\end{equation}
From now on, we will identify the Gaussian channels with the pairs
$(X,Y)$ of real $2n\times 2n$-matrices satisfying
\eqref{eq:XYconstraint}. We denote by $\gauss$ the set of all
Gaussian channels.

It is often useful to depict a Gaussian channel by its action on the
first and second moments of a quantum state. We denote the first
moments by a vector $d\in\real^{2n}$ whose components are
expectation values $d_k\equiv\langle R_k\rangle$ and we define the
\emph{covariance matrix} as
\begin{equation*}
\Gamma_{kl}=\langle\{R_k-d_k,R_l-d_l\}_+\rangle \, .
\end{equation*}
 A Gaussian channel $(X,Y)$ then acts as
\begin{eqnarray}
d &\mapsto & X d \, ,\nonumber\\
\Gamma &\mapsto& X\Gamma X^T+Y\nonumber \, .
\end{eqnarray}

\begin{example}\label{ex:preparation}
The preparation of a Gaussian state is a simple instance of a
Gaussian channel. A Gaussian channel with $X=0$ has an
input-independent output state with covariance matrix $Y$. The
cp-condition \eqref{eq:XYconstraint} then reduces to $Y+i\sigma\geq
0$, which is nothing but the condition for $Y$ to be a valid
covariance matrix.
\end{example}

Generally speaking, the $Y$ contribution in a Gaussian channel
$(X,Y)$ can be regarded  as noise term. It follows from
\eqref{eq:XYconstraint} that $Y\geq\nul$. If $Y=\nul$, then the
cp-condition \eqref{eq:XYconstraint} implies that $X\sigma X^T =
\sigma$, meaning that $X$ is an element of the real symplectic group
$\Symp$. The group of unitary Gaussian channels is therefore
identified with $\Symp$. Let us notice, however, that generally $X$
can be any real matrix, as long as sufficient noise is added (i.e.,
$Y$ is large enough).

\section{Semigroup product}\label{sec:product}

Concatenation of two Gaussian channels is again a Gaussian channel.
 Hence, the set of Gaussian channels forms a semigroup. The semigroup product is given by
\begin{equation}\label{eq:product}
(X_1,Y_1) \cdot (X_2,Y_2) = (X_1 X_2, Y_1 + X_1 Y_2 X_1^T)
\end{equation}
and the identity element is $(\id,\nul)$.

Let us recall that we are using the Heisenberg picture and in the
Schr\"odinger picture the order of the product is opposite to the
order of application of channels. For instance, if $(X_1,Y_1)$ and
$(X_2,Y_2)$ describe optical fibers,  then the product $(X_1,Y_1)
\cdot (X_2,Y_2)$ corresponds to the channel in which  the signal
first goes through $(X_2,Y_2)$ and then through $(X_1,Y_1)$.

The semigroup product \eqref{eq:product} can also be written as an
ordinary matrix product. For each $X,Y\in\mnnr$, we denote
$x=X\otimes X$ and $y\in\real^{2n}\otimes\real^{2n}$ is the column
vector defined through the condition
\begin{equation}\label{eq:matvec}
\ip{e_i \otimes e_j}{y} = \ip{e_i}{Y e_j} \, .
\end{equation}
Then the mapping $\pi:\gauss\to M_{4n^2+2n+1}(\real)$ with
\begin{equation*}
 \pi(X,Y) = \left(\begin{array}{ccc} x & y & 0 \\ 0 & 1 & 0 \\ 0 & 0 & X \end{array}\right)
\end{equation*}
is an injective homomorphism. We conclude that the semigroup of
Gaussian channels is a matrix semigroup---a useful property when it
comes to the discussion of one-parameter semigroups.

\section{Reversible channels}\label{sec:reversible}

A channel is called \emph{reversible} if it has an inverse which is
also a channel (i.e., we mean physically reversible as opposed to
mathematically invertible). Reversible channels are exactly the
unitary channels, and a Gaussian channel $(X,Y)$ is reversible iff
$X\in\Symp$ and $Y=\nul$.

For completeness we briefly discuss the group structure of
reversible channels, which already exhibits some interesting
features. The sympletic group $\Symp$ is a non-compact connected Lie
group whose Lie algebra $\symp$ is given by all real matrices $s$
such that $(s\sigma)^T=s\sigma$. The exponential map
\begin{equation}
 \exp: \symp \to \Symp \, , \quad s\mapsto e^s
\end{equation}
is not surjective - a common feature of non-compact Lie groups.

In physical terms, the lack of surjectivity of the exponential map
means that there are canonical transformations $S\in\Symp$ for which
there is no `Hamiltonian matrix' $s\in\symp$ generating them via
$S=e^s$. Clearly, if we consider the corresponding unitary $U_S$
acting on Hilbert space (i.e., an element of the metaplectic
representation of $\Symp$ \cite{ArDuMuSi95}), then there is always a
Hamiltonian $\hat{H}$ such that $U_S=\exp i \hat{H}$. Such a
Hamiltonian may be obtained from the spectral decomposition of
$U_S$. The point is, however, that $\hat{H}$ cannot be a quadratic
expression in the canonical operators if $S\notin\exp(\symp)$.

A necessary condition for $S\in\exp(\symp)$ is that $S$ has a real
logarithm. We recall the following standard result from matrix
analysis \cite{Culver66}.

\begin{proposition}\label{prop:logarithm}
A real matrix $X\in\mnr$ has a real logarithm $L\in\mnr$ (i.e.
$X=e^L$) iff $X$ is non-singular and the Jordan blocks of $X$
corresponding to negative eigenvalues have even multiplicities.
\end{proposition}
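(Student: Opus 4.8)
The plan is to prove both directions via reduction to the real Jordan canonical form of $X$. For the forward direction, suppose $X=e^L$ with $L\in\mnr$. Then $X$ is non-singular because $\det X = e^{\operatorname{tr}L}>0$. To control the Jordan structure at negative eigenvalues, I would pass to the complexification: over $\complex$, $X=e^L$ implies that for any negative eigenvalue $\lambda<0$ of $X$, the eigenvalue $\mu$ of $L$ with $e^\mu=\lambda$ is non-real (since $\lambda<0$), so $\bar\mu\neq\mu$ is also an eigenvalue of $L$ with $e^{\bar\mu}=\bar\lambda=\lambda$. The Jordan blocks of $L$ for $\mu$ and for $\bar\mu$ have the same sizes (complex conjugation is an $\real$-linear automorphism commuting with $L$), and $e^{(\cdot)}$ maps a single Jordan block of $L$ at $\mu$ to a single Jordan block of $X=e^L$ at $\lambda$ of the same size. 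Hence the Jordan blocks of $X$ at $\lambda$ come in matching pairs coming from $\mu$ and $\bar\mu$, so each size occurs with even multiplicity.

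For the converse, assume $X$ is non-singular and the Jordan blocks at each negative eigenvalue have even multiplicity. I would construct $L$ block by block on the real Jordan form. Blocks associated to non-real complex-conjugate eigenvalue pairs $re^{\pm i\theta}$ of $X$, and blocks associated to positive real eigenvalues, each admit a real logarithm by the standard single-block construction: write the block as $\lambda(I+N)$ with $N$ nilpotent and set the logarithm to be (a real form of) $(\log\lambda)\,I + \log(I+N)$, where $\log(I+N)$ is the terminating power series; for the complex-conjugate case one works in the real $2\times 2$-block representation where $\lambda$ is replaced by the rotation-scaling matrix, whose real logarithm is explicit. The only obstruction is a single Jordan block of size $k$ at a negative eigenvalue $\lambda=-r<0$: such a block alone has no real logarithm, but by hypothesis these come in pairs, and a pair of equal-size blocks $J_k(\lambda)\oplus J_k(\lambda)$ can be realized as $e^M$ for a real $M$ of size $2k$. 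Concretely one takes $\log\lambda = \log r + i\pi$ in the complex picture, forms the complex logarithm of $J_k(\lambda)$, and then notes that the real $2k\times 2k$ matrix built from the real and imaginary parts of this complex logarithm (the standard "realification" of a complex matrix acting on $\complex^k\cong\real^{2k}$) exponentiates to the realification of $J_k(\lambda)$, which is similar over $\real$ to $J_k(\lambda)\oplus J_k(\lambda)$. Assembling these block logarithms and conjugating back by the real similarity that puts $X$ into real Jordan form yields the desired $L\in\mnr$.

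The main obstacle is the careful bookkeeping in the negative-eigenvalue case: one must verify that the realification trick produces exactly two Jordan blocks of the same size $k$ (not, say, one block of size $2k$), and that the even-multiplicity hypothesis is precisely what allows every negative block to be paired off. This is a purely linear-algebraic fact about the real Jordan form and the realification functor $\complex^{k\times k}\to\real^{2k\times 2k}$, and it is exactly the content of Culver's theorem; I would either cite \cite{Culver66} for this step or sketch the realification computation, since the remaining blocks (positive real eigenvalues and genuine complex-conjugate pairs) are routine.
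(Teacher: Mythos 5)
The paper offers no proof of this proposition at all: it is stated as a standard fact with a citation to Culver, so there is nothing internal to compare against. Your sketch is correct and is essentially the standard argument behind that citation: necessity via the observation that any logarithm $\mu$ of a negative eigenvalue $\lambda$ is non-real, hence pairs with $\bar\mu$ carrying the same Jordan structure of the real matrix $L$, combined with the fact that $\exp$ sends a block $J_k(\mu)$ to a single block $J_k(e^\mu)$ because $e^\mu\neq 0$; and sufficiency via blockwise logarithms in the real Jordan form, with paired negative blocks handled by realifying a complex logarithm of $J_k(\lambda)$ and checking that the realification of $J_k(\lambda)$ is real-similar to $J_k(\lambda)\oplus J_k(\bar\lambda)=J_k(\lambda)\oplus J_k(\lambda)$. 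The one spot worth tightening is the phrase ``the eigenvalue $\mu$ of $L$ with $e^\mu=\lambda$'': a given negative $\lambda$ may arise from several distinct eigenvalues $\mu$ of $L$ (differing by multiples of $2\pi i$), but since each such $\mu$ is non-real and is matched by $\bar\mu$ with identical block sizes, the blocks of $X$ at $\lambda$ still pair off and the even-multiplicity conclusion stands.
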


The characterization of the set of $\exp(\symp)$ goes back to
Williamson \cite{Williamson39}. For our purposes, the following
partial characterization should suffice.

\begin{proposition}\label{prop:williamson}
Let $S\in\Symp$. If $-1$ is not an eigenvalue of $S$, then
$S\in\exp(\symp)$ iff $S$ has real logarithm.
\end{proposition}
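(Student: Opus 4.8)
The plan is to reduce the statement to Proposition~\ref{prop:logarithm} together with the structure theory of the symplectic group. One direction is immediate: if $S\in\exp(\symp)$, say $S=e^s$ with $s\in\symp$, then $s$ is by definition a real logarithm of $S$. For the converse I would start from the hypothesis that $-1$ is not an eigenvalue of $S$ and that $S$ has \emph{some} real logarithm; I must produce a logarithm that lies in $\symp$, i.e.\ a real matrix $L$ with $e^L=S$ and $(L\sigma)^T=L\sigma$.

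First I would recall the standard fact that for an invertible matrix $S$ with no negative real eigenvalues there is a \emph{canonical} (principal) logarithm $L_0=\log S$, obtained functionally from the principal branch, which is a real polynomial in $S$; in particular $L_0$ commutes with everything that commutes with $S$. The key observation is that the symplectic condition $S\sigma S^T=\sigma$ can be rewritten as $S^T=\sigma^{-1}S^{-1}\sigma$, i.e.\ $S$ is conjugate (via $\sigma$) to $(S^T)^{-1}=(S^{-1})^T$. Applying the principal logarithm, which is a matrix function respecting transposition and inversion and conjugation, gives $L_0^T=\sigma^{-1}(-L_0)\sigma$, which is exactly the statement $(L_0\sigma)^T=L_0\sigma$, i.e.\ $L_0\in\symp$. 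Hence $S=e^{L_0}\in\exp(\symp)$. Finally I would invoke Proposition~\ref{prop:logarithm}: since $-1$ is not an eigenvalue of $S$ and (by hypothesis) $S$ has a real logarithm, the Jordan-block condition forces $S$ to have \emph{no} negative eigenvalues at all (a negative eigenvalue other than $-1$ always occurs in a symplectic matrix paired with its reciprocal, also negative, so the even-multiplicity condition can be arranged—but here it is cleaner to note that the hypothesis ``$S$ has a real logarithm'' combined with ``$-1\notin\mathrm{spec}(S)$'' is precisely what guarantees the principal logarithm is available). So the chain is: real logarithm exists and $-1\notin\mathrm{spec}(S)$ $\Rightarrow$ no negative eigenvalues $\Rightarrow$ principal logarithm $L_0$ exists and is real $\Rightarrow$ $L_0\in\symp$ by the transpose computation $\Rightarrow$ $S\in\exp(\symp)$.

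The step I expect to be the main obstacle is the clean deduction ``$S$ has a real logarithm and $-1\notin\mathrm{spec}(S)$ implies $S$ has no negative eigenvalues,'' which is where the hypothesis on $-1$ is really used. By Proposition~\ref{prop:logarithm} a real logarithm exists iff every Jordan block for a negative eigenvalue has even multiplicity; I need to rule out negative eigenvalues $\lambda<0$, $\lambda\neq-1$ entirely so that the principal branch is defined. Here the symplectic structure helps: eigenvalues of $S\in\Symp$ come in reciprocal pairs $\{\lambda,1/\lambda\}$ with equal Jordan structure, so a negative eigenvalue $\lambda\ne -1$ pairs with the distinct negative eigenvalue $1/\lambda$; I would need to check that this pairing is compatible with, but does not by itself eliminate, such blocks—so strictly the argument must lean on the assumed existence of a real logarithm rather than on symplecticity alone to exclude them. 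I would write this carefully, perhaps by arguing directly that if $S$ had a negative eigenvalue $\ne-1$ then one could still pass to a block-diagonal real form in which the negative-eigenvalue part is a genuine symplectic sub-block that already lies in $\exp$ of its own Lie algebra (a $2\times2$ real matrix with eigenvalues $\lambda,1/\lambda<0$ is $-R$ for a rotation-type $R$ and handled explicitly), then reassemble. The remaining verifications—that the principal logarithm is real, that it respects the three operations (transpose, inverse, $\sigma$-conjugation), and that $(L\sigma)^T=L\sigma$ is equivalent to $L\in\symp$—are routine and I would not belabor them.
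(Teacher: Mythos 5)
The paper does not actually prove this proposition; it is quoted as a known (partial) consequence of Williamson's characterization of $\exp(\symp)$ \cite{Williamson39}. So the only thing to assess is whether your argument is sound on its own, and it is not: the pivotal reduction, ``$S$ has a real logarithm and $-1\notin\mathrm{spec}(S)$ $\Rightarrow$ $S$ has no negative eigenvalues,'' is false. Take $n=2$ and $S$ symplectically equivalent to two copies of the single-mode matrix $\diag(-2,-1/2)$, i.e.\ $S=\diag(-2,-1/2,-2,-1/2)$. Then $-1\notin\mathrm{spec}(S)$, and every Jordan block at a negative eigenvalue ($J_1(-2)$ and $J_1(-1/2)$) occurs with multiplicity two, so Prop.~\ref{prop:logarithm} gives a real logarithm; yet $S$ has negative eigenvalues, so the principal logarithm on which your whole construction rests does not exist. (And $S$ really does lie in $\exp(\symp)$: $L=\pi J+D$ with $J$ the symplectic mode-swap satisfying $J^2=-\id$ and $D$ the commuting squeezing generator does the job --- but $L$ is not a polynomial in $S$, which is exactly why the functional-calculus route cannot produce it.)

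Your own hedging at the end correctly locates the problem, but the proposed repair does not work either: the parenthetical claim that a $2\times2$ real block with eigenvalues $\lambda,1/\lambda<0$ is ``$-R$ for a rotation-type $R$'' is wrong for hyperbolic blocks such as $\diag(-2,-1/2)$, and in fact that single-mode block has \emph{no} real logarithm (odd multiplicities), so the cases satisfying the hypothesis are precisely those where Jordan blocks must be paired \emph{across} different symplectic sub-blocks. Showing that such a pairing can always be realized by a logarithm that is simultaneously real and symplectic is the genuine content of Williamson's theorem; it is not recovered by the transpose identity $L_0^T=-\sigma^{-1}L_0\sigma$ for the principal branch (which is fine as far as it goes, but only applies when $\mathrm{spec}(S)\cap(-\infty,0]=\emptyset$). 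The part of your argument covering that spectral situation is correct and is indeed the standard proof of the ``positive $S$'' corollary mentioned after the proposition; for the general statement you would need to either reproduce Williamson's block analysis or, as the paper does, cite it.
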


A simple consequence of Prop. \ref{prop:williamson} is that if
$S\in\Symp$ is positive, then $S\in\exp(\symp)$.

An important subgroup $K(2n)$ of $\Symp$ is formed by those matrices
which are in addition orthogonal, i.e., $K(2n)=\Symp \cap \SO$. The
corresponding maps are called \emph{passive transformations} as they
preserve the number of particles \cite{ArDuMuSi95}. The subgroup
$K(2n)$ is a (maximal) compact subgroup of $\Symp$ and it is
isomorphic to $U(n)$. Consequently, the exponential map from the Lie
algebra $\mathfrak{K}(2n)$ to $K(2n)$ is surjective. That is, every
$S\in K(2n)$ has the property that
$S\in\exp(\mathfrak{K}(2n))\subset\exp(\symp)$.

Using the Euler decomposition (cf. \cite{ArDuMuSi95}) each
$S\in\Symp$ can be written as a product $S=K_1DK_2$, where
$K_1,K_2\in K(2n)$ and $D\in\Symp$ is a diagonal matrix of the form
$D=\textrm{diag}(d_1,1/d_1,\ldots,d_n,1/d_n)$ with $d_1,\ldots,d_n >
0$.
The diagonal matrix $D$ describes single-mode squeezings.
Thus, every reversible Gaussian channel is a concatenation of passive transformations and single-mode
squeezings.

From the previous discussion we arrive at the following conclusion.

\begin{proposition}\label{prop:SS}
Let $S\in\Symp$. There are $S_1,S_2\in\exp(\symp)$ such that
$S=S_1S_2$.
\end{proposition}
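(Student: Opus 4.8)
The plan is to use the Euler decomposition $S = K_1 D K_2$ with $K_1, K_2 \in K(2n)$ and $D = \diag(d_1, 1/d_1, \ldots, d_n, 1/d_n)$, $d_i > 0$, together with the three facts already assembled in the section: (a) every element of the compact subgroup $K(2n)$ lies in $\exp(\mathfrak{K}(2n)) \subset \exp(\symp)$; (b) a positive symplectic matrix lies in $\exp(\symp)$ (the noted consequence of Prop.\ \ref{prop:williamson}); and (c) $\exp(\symp)$ is invariant under conjugation by elements of $\Symp$, since $e^{TsT^{-1}} = T e^s T^{-1}$ and $TsT^{-1} \in \symp$ whenever $s \in \symp$, $T \in \Symp$. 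The diagonal matrix $D$ is symmetric and positive definite, hence positive, so $D \in \exp(\symp)$ by (b).

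First I would write $S = (K_1 D K_1^{-1})(K_1 K_2)$. The second factor $K_1 K_2$ is a product of two elements of $K(2n)$, hence again in $K(2n)$, so it lies in $\exp(\symp)$ by (a). For the first factor, $K_1 D K_1^{-1}$ is symplectic (as a product of symplectic matrices) and, because $K_1$ is orthogonal, it equals $K_1 D K_1^T$, which is symmetric and positive definite, hence positive; by (b) it lies in $\exp(\symp)$. Setting $S_1 = K_1 D K_1^{-1}$ and $S_2 = K_1 K_2$ gives the claimed factorization. (Alternatively one can invoke (c) directly: $K_1 D K_1^{-1} \in K_1 \exp(\symp) K_1^{-1} = \exp(\symp)$.)

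I do not anticipate a serious obstacle here: the only things to check are that $K_1 D K_1^T$ is genuinely positive (immediate, since $D > 0$ and $K_1$ is invertible) and that it is still symplectic (it is a product of symplectic matrices, using $K(2n) \subset \Symp$ and $D \in \Symp$). The mild subtlety worth stating explicitly is \emph{why} a positive symplectic matrix has a real logarithm lying in $\symp$ rather than merely in $\mnnr$: a positive definite matrix has a unique positive definite (hence real) logarithm $L$, and one checks $L \in \symp$ by noting that positivity plus $S \in \Symp$ forces $\sigma S \sigma^{-1} = (S^T)^{-1} = S^{-1}$, so $S$ and $S^{-1}$ are conjugate; passing to logarithms and using uniqueness gives $\sigma L \sigma^{-1} = -L$, i.e.\ $(L\sigma)^T = L\sigma$, which is exactly the defining condition for $\symp$. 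This is the content already credited to Prop.\ \ref{prop:williamson}, so in the write-up it can simply be cited. The proposition then follows in a couple of lines.
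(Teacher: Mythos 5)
Your proposal is correct and is essentially the paper's own proof: since $K_1$ is orthogonal, your factorization $S=(K_1DK_1^{-1})(K_1K_2)$ is literally the paper's $S=(K_1DK_1^T)(K_1^{-T}K_2)$, with the same two facts invoked (positive symplectic matrices and elements of $K(2n)$ lie in $\exp(\symp)$). The extra remark justifying why a positive symplectic matrix has its logarithm in $\symp$ is a nice, correct elaboration of what the paper leaves to Prop.~\ref{prop:williamson}.
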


\begin{proof}
Since $S$ has an Euler decomposition
\begin{equation*}
S=K_1DK_2=K_1DK_1^TK_1^{-T}K_2 \, .
\end{equation*}
it is a product of a positive  $S_1:=K_1DK_1^T$ and an orthogonal
$S_2=K_1^{-T}K_2$ symplectic matrix. Both $S_1,S_2\in\exp(\symp)$,
as discussed earlier.
\end{proof}

\section{One-parameter Gaussian semigroups}\label{sec:one-param}

\subsection{General form}

By \emph{ one-parameter  semigroup of Gaussian channels} we mean a
family of Gaussian channels, parametrized by $\real^+$, which
satisfies the following conditions:
\begin{itemize}
\item[(i)] \emph{continuity}: $X_t$ and $Y_t$ depend on $t\in\real^+$ in a continuous way
\item[(ii)] \emph{semigroup property}: $X_tX_s=X_{t+s}$ and $Y_{t+s}=Y_t + X_tY_sX_t^T$.
\item[(iii)] \emph{connected to the identity}: $X_0=\id$ and $Y_0=\nul$
\end{itemize}

As we have seen in Section \ref{sec:product}, the semigroup of
Gaussian channels is a matrix semigroup. It follows that every
one-parameter semigroup of Gaussian channels has a generator
\cite{SCOS06}. In particular, the mappings $t\mapsto X_t$ and
$t\mapsto Y_t$ are differentiable. The following characterization
(although in a slightly different form) has been derived in
\cite{Vanheuverzwijn78}.

\begin{proposition}\label{prop:one-param-gen}
 A family $(X_t,Y_t)_{t\geq 0}$ of Gaussian channels
 forms a one-parameter semigroup iff there exists real
 matrices $A,B,H$ with $iA+B\geq \nul$, $A^T=-A$, $H^T=H$ such that
 \begin{eqnarray}
  X_t &=& e^{t(A-H)\sigma} \, , \label{eq:one-param-gen-X} \\
  Y_t &=& 2\int_0^t X_{s}^T \sigma^T B \sigma X_s \ ds \, . \label{eq:one-param-gen-Y}
 \end{eqnarray}

In the Heisenberg picture this corresponds to an evolution of any
observable $O$ governed by a master equation $\partial_t
O=\mathcal{L}(O)$ with a Liouvillian
\begin{eqnarray}
\mathcal{L}(O) &=& i[\hat{H},O]+\sum_\alpha \hat{L}_\alpha^\ast O
\hat{L}_\alpha-\frac12\{\hat{L}_\alpha^\ast\hat{L}_\alpha,O\}_+ \, , \label{eq:Liouvillian} \\
\hat{H} &=& \frac12 \sum_{kl} H_{kl} R_kR_l \, ,\\
\hat{L}_\alpha &=&\sum_k L_{\alpha,k}R_k,\quad\mbox{with\quad}
B+iA=L^\ast L \label{eq:L} \, .
\end{eqnarray}
\end{proposition}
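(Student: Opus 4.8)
The plan is to prove both directions by reducing everything to the matrix semigroup picture of Section \ref{sec:product} and then differentiating at $t=0$. For the "if" direction, suppose $X_t, Y_t$ are defined by \eqref{eq:one-param-gen-X}--\eqref{eq:one-param-gen-Y}. The semigroup property $X_tX_s=X_{t+s}$ and $X_0=\id$ are immediate from the exponential form, and the cocycle identity $Y_{t+s}=Y_t+X_tY_sX_t^T$ follows by splitting the integral $\int_0^{t+s}=\int_0^t+\int_t^{t+s}$ and performing the substitution $u=s'-t$ in the second piece, using $X_{t+u}=X_tX_u$ and the fact that the integrand is quadratic in $X$. Continuity and $Y_0=\nul$ are clear. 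The one genuinely substantive point here is checking the cp-condition \eqref{eq:XYconstraint} for every $t$: I would show that $Y_t - i(\sigma - X_t\sigma X_t^T)\geq 0$ by writing the right-hand side also as an integral. Differentiating $X_s\sigma X_s^T$ and using $\dot X_s = (A-H)\sigma X_s$ together with $(H\sigma)^T = H\sigma$ and $(A\sigma)^T=-A\sigma$ (equivalently $\sigma^T=-\sigma$, $A^T=-A$, $H^T=H$), one gets $\frac{d}{ds}\big(\sigma - X_s\sigma X_s^T\big) = -2 X_s^T{}^{-1}\!\cdots$; more cleanly, I would verify directly that $\sigma - X_t\sigma X_t^T = -2\int_0^t X_s^T\sigma^T A\sigma X_s\,ds$ by checking both sides vanish at $t=0$ and have equal derivatives. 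Then $Y_t - i(\sigma-X_t\sigma X_t^T) = 2\int_0^t X_s^T\sigma^T(B+iA)\sigma X_s\,ds \geq 0$ since $B+iA\geq 0$ and conjugation preserves positivity; this is exactly where the hypothesis $iA+B\geq\nul$ is used.

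For the "only if" direction, let $(X_t,Y_t)_{t\geq0}$ be a one-parameter semigroup of Gaussian channels. By the remark preceding the proposition, the matrix semigroup structure of Section \ref{sec:product} (the embedding $\pi$) together with \cite{SCOS06} guarantees that $t\mapsto X_t$ and $t\mapsto Y_t$ are differentiable, so there is a generator. From $X_tX_s=X_{t+s}$, $X_0=\id$ we get $X_t = e^{tG}$ with $G=\dot X_0$. I then set $G\sigma^{-1}$ — equivalently write $G = M\sigma$ for the unique $M$ — and decompose $M = A - H$ into its antisymmetric part $A = \frac12(M-M^T)$ and (minus) symmetric part $H = -\frac12(M+M^T)$; this is just linear algebra and forces $A^T=-A$, $H^T=H$. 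Differentiating the cocycle relation $Y_{t+s}=Y_t+X_tY_sX_t^T$ in $s$ at $s=0$ gives the linear ODE $\dot Y_t = X_t\,\dot Y_0\,X_t^T$, whose solution with $Y_0=\nul$ is $Y_t = \int_0^t X_s\,\dot Y_0\,X_s^T\,ds$. Writing $\dot Y_0 = 2\sigma^T B\sigma$ for a symmetric $B$ (possible since $\dot Y_0$ is symmetric and $\sigma$ is invertible — note $X_s$ here versus $X_s^T$ in \eqref{eq:one-param-gen-Y} is reconciled by replacing $X_s$ with $X_s^T = e^{s\sigma^T(A+H)}$... I would be careful to match conventions, using $X_s^T\sigma^T B\sigma X_s$ as in the statement) yields \eqref{eq:one-param-gen-Y}. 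Finally, the cp-condition for small $t$ gives, after dividing by $t$ and letting $t\to0^+$, exactly $iA+B\geq\nul$ — reversing the integral computation of the first part.

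The remaining claim — the identification with the Liouvillian \eqref{eq:Liouvillian}--\eqref{eq:L} — is standard: a quadratic Hamiltonian $\hat H = \frac12\sum H_{kl}R_kR_l$ generates the symplectic part $e^{-tH\sigma}$ (up to sign conventions) of the Heisenberg evolution of the $R_k$, and Lindblad operators $\hat L_\alpha = \sum_k L_{\alpha,k}R_k$ linear in the canonical operators produce, via the dissipator, the antisymmetric contribution $A\sigma$ to the drift (from the commutators $[\hat L_\alpha^\ast,\hat L_\alpha]$ being $c$-numbers) and the diffusion term $B$ in $Y_t$, with $B+iA = L^\ast L$ being precisely the Gram-type matrix of the coefficient vectors; positivity $B+iA\geq\nul$ is automatic for any such $L$, and conversely any $B+iA\geq\nul$ admits a factorization $L^\ast L$. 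I would verify this by computing $\mathcal{L}(R_k)$ and $\mathcal{L}(R_jR_k)$ explicitly on generators and matching with $\partial_t X_t$, $\partial_t Y_t$.

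The main obstacle I expect is bookkeeping with the several sign and transpose conventions — the placement of $\sigma$ versus $\sigma^T = -\sigma$, whether $X_t$ or $X_t^T$ appears inside the integral, and the sign in $X_t = e^{t(A-H)\sigma}$ versus the Heisenberg evolution of $R$ — rather than anything conceptually deep. Getting the cp-condition to come out as the clean inequality $iA+B\geq\nul$ (and not $iA+B\geq\nul$ with a spurious sign) pins down all these conventions, so I would treat that computation as the anchor and align everything else to it.
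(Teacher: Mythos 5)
The paper does not actually prove this proposition: it is quoted from \cite{Vanheuverzwijn78}, with differentiability of $t\mapsto(X_t,Y_t)$ supplied by the matrix-semigroup embedding of Sec.~\ref{sec:product} together with \cite{SCOS06}. So there is no in-paper proof to compare against; judged on its own, your strategy --- differentiate the two semigroup laws at $t=0$ to get the generator, integrate back, and convert the cp-condition \eqref{eq:XYconstraint} into positivity of the generator data via an integral identity for $\sigma-X_t\sigma X_t^T$ --- is the right one and is essentially how such results are established.

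However, the sketch has a real unresolved gap at exactly the point you nominate as the ``anchor''. Differentiating $Y_{t+s}=Y_t+X_tY_sX_t^T$ in $s$ at $s=0$ gives $\dot Y_t=X_t\dot Y_0X_t^T$, hence $Y_t=\int_0^tX_s\,\dot Y_0\,X_s^T\,ds$: the integrand has the form $X_s(\cdot)X_s^T$, whereas \eqref{eq:one-param-gen-Y} has $X_s^T(\cdot)X_s$. These are genuinely different matrices; with $G=(A-H)\sigma$ one finds $X_s^T=e^{s\sigma(A+H)}$ (not your $e^{s\sigma^T(A+H)}$), which is not expressible through $X_s$ unless $A$ or $H$ vanishes. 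The same mismatch defeats the ``immediate'' cocycle verification in the ``if'' direction: splitting the integral for the integrand as printed yields $\int_0^sX_v^TX_t^T(\sigma^TB\sigma)X_tX_v\,dv$, which is not $X_t\bigl(\int_0^sX_v^T\sigma^TB\sigma X_v\,dv\bigr)X_t^T$. Likewise the correct identity is $\sigma-X_t\sigma X_t^T=2\int_0^tX_sAX_s^T\,ds$ (since $G\sigma+\sigma G^T=-2A$), again of the $X_s(\cdot)X_s^T$ form, so your proposed combination of the two integrals does not go through as written. To close the argument you must carry the whole computation in one fixed convention and then observe that the resulting constraint $\sigma^TB\sigma\pm i\,\sigma A\sigma^T\geq\nul$ is of the stated form after the harmless relabelling $A\mapsto\sigma A\sigma$, $B\mapsto\sigma^TB\sigma$ (conjugation by the real orthogonal $\sigma$ preserves antisymmetry, symmetry and positivity) --- or equivalently recognize that the transposes in \eqref{eq:one-param-gen-Y} refer to the action on phase-space arguments $\xi\mapsto X^T\xi$ rather than on moments. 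As it stands this is not mere bookkeeping: the formula you derive and the formula you are asked to prove do not coincide. The Lindblad identification is also only asserted; computing $\mathcal{L}(R_k)$ and $\mathcal{L}(R_jR_k)$ as you propose would complete it, but that step is not carried out.
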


A simple consequence of this characterization is the following.

\begin{corollary}\label{cor:Xlog}
Let $X\in\mnnr$. The following conditions are equivalent:
\begin{itemize}
\item[(i)] There exists a matrix
 $Y\in\mnnr$ such that $(X,Y)$ is an element of a
 one-parameter semigroup of Gaussian channels.
\item[(ii)] $X$ is non-singular and the Jordan blocks of $X$
corresponding to negative eigenvalues have even multiplicities.
\end{itemize}
\end{corollary}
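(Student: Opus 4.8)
The plan is to reduce condition (i) to the existence of a real logarithm of $X$ and then invoke Proposition~\ref{prop:logarithm}, which identifies the latter with condition (ii).

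First I would dispose of the trivial case: if $(X,Y)$ sits at parameter value $t=0$ in a one-parameter semigroup, then $X=\id=e^{\nul}$ and (ii) holds. Otherwise $(X,Y)=(X_{t_0},Y_{t_0})$ for some $t_0>0$, and rescaling the parameter (replacing $X_t,Y_t$ by $X_{t_0 t},Y_{t_0 t}$) yields again a one-parameter semigroup in which $X$ occurs at parameter value $1$. Hence, by Proposition~\ref{prop:one-param-gen}, condition (i) is equivalent to the existence of a real antisymmetric $A$ and a real symmetric $H$ with $X=e^{(A-H)\sigma}$, together with some real $B$ satisfying $iA+B\geq\nul$ (the latter being needed only to define an admissible $Y$ via \eqref{eq:one-param-gen-Y}).

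Next I would make two elementary observations. One: every real matrix $M\in\mnnr$ decomposes as $M=A-H$ with $A=\tfrac12(M-M^T)$ antisymmetric and $H=-\tfrac12(M+M^T)$ symmetric; since right multiplication by the invertible matrix $\sigma$ is a bijection of $\mnnr$, the matrices of the form $(A-H)\sigma$ range over all of $\mnnr$. Two: the constraint $iA+B\geq\nul$ puts no restriction on $A$ whatsoever, because $iA$ is Hermitian and so $B:=\no{A}\,\id$ already does the job — this is just the phenomenon noted in Sec.~\ref{sec:channels}, that any matrix is admissible once enough noise is added. Combining these, (i) holds iff $X=e^{L}$ for some $L\in\mnnr$, i.e.\ iff $X$ has a real logarithm.

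Finally, Proposition~\ref{prop:logarithm} says that $X$ has a real logarithm exactly when $X$ is non-singular and its Jordan blocks for negative eigenvalues have even multiplicities, which is precisely (ii); for the direction (ii)$\Rightarrow$(i) one should, to be fully explicit, exhibit the semigroup: given a real logarithm $L$ of $X$, set $M=L\sigma^{-1}$, write $M=A-H$ as above, put $B=\no{A}\,\id$, and let $(X_t,Y_t)$ be defined by \eqref{eq:one-param-gen-X}--\eqref{eq:one-param-gen-Y}, so that $X_1=e^{(A-H)\sigma}=e^{L}=X$ and $Y:=Y_1$. I do not anticipate any genuine obstacle; the only point meriting a moment's thought is the realization that the positivity constraint linking $A$ and $B$ can always be met by taking $B$ large, so it never interferes with reconstructing the semigroup from the logarithm.
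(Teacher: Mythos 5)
Your proposal is correct and follows essentially the same route as the paper: necessity of a real logarithm from Prop.~\ref{prop:one-param-gen}, sufficiency by decomposing $L\sigma^{-1}=L\sigma^{T}=A-H$ into antisymmetric and symmetric parts and taking $B=\no{A}\,\id$ to satisfy $iA+B\geq\nul$, then invoking Prop.~\ref{prop:logarithm}. The extra details you supply (the $t=0$ case, the rescaling to $t=1$, the bijectivity of right multiplication by $\sigma$) are harmless elaborations of the paper's argument rather than a different method.
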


\begin{proof}
By Prop.\ref{prop:logarithm} the condition (ii) on $X$ is nothing
but the existence of a real logarithm $L$ which is clearly necessary
for $X$ to occur in an element of a one-parameter semigroup of
Gaussian channels. Hence, (i) implies (ii).

In order to see the other direction, suppose that (ii) holds and let
$L$ be a real logarithm of $X$. Let us decompose $L\sigma^T=A-H$
into a symmetric part $H=H^T$ and anti-symmetric part $A=-A^T$,
respectively. Then there is always a $B\in\mnnr$ (e.g. $B =
\no{A}_\infty\id$) such that $iA+B\geq 0$ and we can construct a
one-parameter semigroup of Gaussian channels by following the
characterization in Prop.\ref{prop:one-param-gen}.
\end{proof}

Prop. \ref{prop:one-param-gen} gives a complete but rather
cumbersome characterization of one-parameter semigroups of Gaussian
channels. In particular, the appearing integral might complicate
further use of the characterization. Fortunately, almost all
generators of such semigroups allow for a simpler representation
discussed in the next subsection.

\subsection{Simple form}

Suppose that $\{X_t\}_{t\geq 0}$ is a semigroup and fix a real symmetric matrix $\mathcal{Y}$.
Then by setting
 \begin{equation}\label{eq:one-param-simple}
  Y_t = \mathcal{Y}-X_t \mathcal{Y} X_t^T
 \end{equation}
the semigroup property for $(X_t,Y_t)_{t\geq 0}$ is satisfied. The
cp-condition now reads
\begin{equation}\label{eq:cp-simple}
  \mathcal{Y}- i\sigma \geq  X_t (\mathcal{Y} - i\sigma) X_t^T \, .
\end{equation}

\begin{example}
An \emph{amplification channel} is of the form $X=\sqrt{\eta}\ \id$,
$Y=(\eta-1)\ \id$ for some $\eta\in (1,\infty)$.
Amplification channels form a one-parameter semigroup. Namely,
\begin{equation*}
X_t=e^{t}\id \, , \quad Y_t=(e^{2t}-1)\id \, .
\end{equation*}
This is of the simple form \eqref{eq:one-param-simple} with
$\mathcal{Y}=-\id$.
\end{example}

\begin{proposition}\label{prop:one-param-simple}
A one-parameter Gaussian semigroup $(X_t,Y_t)_{t\geq 0}$ is of the
simple form \eqref{eq:one-param-simple} if the operator
$(A-H)\sigma$ in Prop. \ref{prop:one-param-gen} (i.e., the generator
of $X_t$) does not have a pair of eigenvalues of the form
$\pm\lambda$.
\end{proposition}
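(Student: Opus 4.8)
The plan is to reduce the statement to a Lyapunov-type matrix equation. Write $G=(A-H)\sigma$ for the generator of $X_t$, so $X_t=e^{tG}$. I want to produce a symmetric $\mathcal{Y}$ with $Y_t=\mathcal{Y}-X_t\mathcal{Y}X_t^T$ for all $t$. Differentiating this ansatz at $t=0$ and using $\dot Y_0=2\sigma^T B\sigma$ from \eqref{eq:one-param-gen-Y}, the condition becomes the Sylvester/Lyapunov equation
\begin{equation*}
G\mathcal{Y}+\mathcal{Y}G^T=-2\sigma^T B\sigma \, .
\end{equation*}
Conversely, if $\mathcal{Y}$ solves this equation then $\tilde Y_t:=\mathcal{Y}-X_t\mathcal{Y}X_t^T$ satisfies $\dot{\tilde Y}_t=X_t(G\mathcal{Y}+\mathcal{Y}G^T)(-1)X_t^T\cdot(-1)=2X_t\sigma^T B\sigma X_t^T$, wait—one checks $\dot{\tilde Y}_t = -G X_t\mathcal Y X_t^T - X_t\mathcal Y X_t^T G^T = X_t(-G\mathcal Y-\mathcal Y G^T)X_t^T = 2X_t\sigma^TB\sigma X_t^T=\dot Y_t$, and $\tilde Y_0=0=Y_0$, so $\tilde Y_t=Y_t$ for all $t$. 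Hence the whole problem is: \emph{does the Lyapunov equation $G\mathcal{Y}+\mathcal{Y}G^T=C$ admit a symmetric solution?} Since the right-hand side $C=-2\sigma^T B\sigma$ is symmetric, and the map $\mathcal{Y}\mapsto G\mathcal{Y}+\mathcal{Y}G^T$ preserves symmetry, it suffices to show this map is invertible on the space of all matrices (or at least that $C$ lies in its range).

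The standard fact is that the linear operator $\mathcal{Y}\mapsto G\mathcal{Y}+\mathcal{Y}G^T$ on $M_{2n}(\real)$ (or its complexification) has eigenvalues exactly $\lambda_i+\lambda_j$ where $\lambda_i$ ranges over the eigenvalues of $G$ and $\lambda_j$ over those of $G^T$, i.e., over the eigenvalues of $G$ again. So the operator is invertible iff no two eigenvalues of $G$ sum to zero—precisely the hypothesis that $G$ has no eigenvalue pair of the form $\pm\lambda$ (this includes ruling out a zero eigenvalue of even multiplicity contributing $0+0$, and also $\lambda$ together with $-\lambda$). Under the hypothesis, therefore, the Lyapunov equation has a unique solution $\mathcal{Y}$ for any right-hand side; restricting to the symmetric right-hand side $C$ and using that the adjoint structure respects transposition, the unique solution is automatically symmetric (if $\mathcal{Y}$ solves the equation then so does $\mathcal{Y}^T$, and by uniqueness $\mathcal{Y}=\mathcal{Y}^T$). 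Then $(X_t,Y_t)$ is exactly of the simple form \eqref{eq:one-param-simple} with this $\mathcal{Y}$, and the cp-condition \eqref{eq:XYconstraint} is inherited automatically since $(X_t,Y_t)$ was assumed to be a genuine Gaussian semigroup.

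The key steps in order: (1) differentiate the simple-form ansatz and identify the Lyapunov equation as the necessary and sufficient condition, using \eqref{eq:one-param-gen-Y} to read off $\dot Y_0$; (2) invoke the spectral description of the Lyapunov operator $\mathcal{Y}\mapsto G\mathcal{Y}+\mathcal{Y}G^T$ to conclude solvability from the no-$\pm\lambda$-pair hypothesis; (3) use uniqueness plus the transpose symmetry of the equation to get that the solution $\mathcal{Y}$ is symmetric; (4) observe that integrating back recovers $Y_t$ and that cp is automatic. The main obstacle is step (2)–(3): one must be slightly careful that the hypothesis ``no pair $\pm\lambda$'' is the exactly right condition—in particular it must be read as forbidding $\lambda$ and $-\lambda$ both being eigenvalues \emph{including the degenerate case $\lambda=0$}, since a single zero eigenvalue already gives $0+0=0$ and kills invertibility; I would state this carefully (the proposition's phrasing ``a pair of eigenvalues of the form $\pm\lambda$'' should be understood to include $\lambda=0$, i.e. $0$ is not an eigenvalue, which is consistent with $X_t$ forming a semigroup of nonsingular matrices by Corollary \ref{cor:Xlog} only in a limiting sense—actually $G$ may well be singular, so the intended reading is that $\mathcal{Y}$ need only be found, and if $0$ is a defective-free eigenvalue one handles the range condition directly). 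The cleanest route is to work on the complexification, diagonalize or triangularize $G$, and solve the equation block by block, noting the only obstruction to solving for the $(i,j)$ block is $\lambda_i+\lambda_j=0$.
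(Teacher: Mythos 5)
Your proof is correct and follows essentially the same route as the paper's: reduce the problem to the Lyapunov equation $\tilde{A}\mathcal{Y}+\mathcal{Y}\tilde{A}^T=\tilde{B}$ by matching $\dot{Y}_0$, invoke the spectral description of $\mathcal{Y}\mapsto\tilde{A}\mathcal{Y}+\mathcal{Y}\tilde{A}^T$ (eigenvalues $\lambda_i+\lambda_j$) to get solvability from the no-$\pm\lambda$ hypothesis, and use the transpose symmetry of the equation to obtain a symmetric $\mathcal{Y}$. Your explicit check that a solution of the Lyapunov equation actually reproduces $Y_t$ for all $t$ (not just its derivative at $0$), and your remark that the hypothesis must be read as excluding $\lambda=0$ as well, are careful additions that the paper leaves implicit.
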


\begin{proof}
The general form \eqref{eq:one-param-gen-Y} gives $\dot{Y}_{0} = 2\
\sigma^T B \sigma\equiv \tilde{B}$. On the other hand, the simple
form \eqref{eq:one-param-simple} leads to
\begin{equation*}
\dot{Y}_{0} = \tilde{A}\mathcal{Y} + \mathcal{Y}\tilde{A}^T \, ,
\end{equation*}
where $\tilde{A}=(H-A)\sigma$. Hence, in order for $(X_t,Y_t)_{t\geq
0}$ to be of the form \eqref{eq:one-param-simple}, we need to find
$\mathcal{Y}$ such that
\begin{equation}\label{eq:condition-for-simple-form}
 \tilde{A}\mathcal{Y} + \mathcal{Y}\tilde{A}^T = \tilde{B}\, .
\end{equation}
The linear equation \eqref{eq:condition-for-simple-form} can be
written as
\begin{equation*}
 ( \id \otimes \tilde{A}^T + \tilde{A} \otimes \id ) \ket{\mathcal{Y}} = \ket{\tilde{B}}
\end{equation*}
Hence, if $\id \otimes \tilde{A}^T + \tilde{A} \otimes \id$ is
invertible, then we have a solution for $\mathcal{Y}$.

The eigenvalues of $\id \otimes \tilde{A}^T + \tilde{A} \otimes \id$
are of the form $\lambda_i + \lambda_j$, where $\lambda_i,\lambda_j$
are eigenvalues of $\tilde{A}$ (this can be seen e.g. using Schur
upper-triangular form for $\tilde{A}$ and $\tilde{A}^T$). Therefore,
if $\tilde{A}$ has the property that the sum of any two of its
eigenvalues is nonzero, the invertibility of $\id \otimes
\tilde{A}^T + \tilde{A} \otimes \id$ follows.

If $\mathcal{Y}$ is a solution then $\mathcal{Y}^T$ and hence
$\frac{1}{2}\left(\mathcal{Y}+\mathcal{Y}^T\right)$ are again
solutions. Therefore, $\mathcal{Y}$ can be chosen symmetric.
\end{proof}

Prop.\ref{prop:one-param-simple} implies that almost all
one-parameter semigroups admit a representation of the simple form
in Eq.\eqref{eq:one-param-simple}. Moreover, it shows that we can
approximate any one-parameter Gaussian semigroup with a
one-parameter semigroup of the simple form. Namely, assume
$(X_t,Y_t)_{t\geq 0}$ is a one-parameter Gaussian semigroup which is
not of the simple form \eqref{eq:one-param-simple}. Suppose that
$A,B,H$ are as in Prop. \ref{prop:one-param-gen}. We can clearly
make an arbitrarily small change in the matrices $A$ and $H$ (hence
getting new matrices $A'$ and $H'$) in a way that the set of
eigenvalues of $(A'-H')\sigma$ do not contain pairs of the type
$\pm\lambda$. If necessary, we also make a small change to $B$,
obtaining $B'$, to guarantee the condition $iA'+B'\geq\nul$. Hence,
there exists a one-parameter semigroup  $(X'_t,Y'_t)_{t\geq 0}$ of
the simple form which has generating matrices  $A',B',H'$ arbitrary
close to $A,B,H$.

\begin{example}
Not all one-parameter Gaussian semigroups are of the simple form
\eqref{eq:one-param-simple}. For instance, suppose $n=1$ and choose
$A=\nul$ together with
\begin{equation*}
H=\left( \begin{array}{cc}0 & 1 \\1 & 0\end{array} \right) \, .
\end{equation*}
In this case
\begin{equation*}
\tilde{A}=(H-A)\sigma=\left( \begin{array}{cc} -1 & 0 \\0 &
1\end{array} \right) \, .
\end{equation*}
The generated matrix $X_t=e^{-t\tilde{A}}$ corresponds to squeezing.

With any $\mathcal{Y}$, the matrix
$\tilde{A}\mathcal{Y}+\mathcal{Y}\tilde{A}$ is diagonal. Therefore,
the condition \eqref{eq:condition-for-simple-form} cannot be
satisfied whenever $B$ (and hence $\tilde{B}$) is a non-diagonal
matrix. On the other hand, since $A=\nul$ any positive matrix is a
possible choice for $B$.
\end{example}

\subsection{Bounded evolutions}

We say that a one-parameter semigroup $(X_t,Y_t)_{t\geq 0}$ has
\emph{bounded noise term} if there is a constant $c$ such that
$\no{Y_t}\leq c$ for all $t\in\real^+$.

\begin{proposition}
Let $(X_t,Y_t)_{t\geq 0}$ be a one-parameter semigroup of Gaussian
channels. The following conditions are equivalent:
\begin{itemize}
 \item[(i)] There exists a positive matrix $\mathcal{Y}$ such that
 \begin{equation}\label{eq:YY}
  Y_t = \mathcal{Y}-X_t \mathcal{Y} X_t^T \, .
 \end{equation}
 \item[(ii)] $(X_t,Y_t)_{t\geq 0}$ has bounded noise term.
\end{itemize}
\end{proposition}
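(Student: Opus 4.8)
The plan is to prove the two implications separately, with the direction (i)$\Rightarrow$(ii) being the easy one and (ii)$\Rightarrow$(i) requiring the real work. For (i)$\Rightarrow$(ii): if $Y_t=\mathcal{Y}-X_t\mathcal{Y}X_t^T$ with $\mathcal{Y}\geq\nul$, then using the cp-condition in the form \eqref{eq:cp-simple}, namely $\mathcal{Y}-i\sigma\geq X_t(\mathcal{Y}-i\sigma)X_t^T$, together with $\mathcal{Y}\geq\nul$, I would bound $\no{Y_t}\leq\no{\mathcal{Y}}+\no{X_t\mathcal{Y}X_t^T}$. The subtlety is that $\no{X_t\mathcal{Y}X_t^T}$ need not be obviously bounded just from $\mathcal{Y}\geq 0$; but from \eqref{eq:cp-simple} one gets $0\leq X_t\mathcal{Y}X_t^T\leq\mathcal{Y}+i\sigma-i X_t\sigma X_t^T-$(nonneg.), wait — more carefully, \eqref{eq:cp-simple} gives $X_t(\mathcal{Y}-i\sigma)X_t^T\leq\mathcal{Y}-i\sigma$, and taking the Hermitian (real) part (the map $M\mapsto\frac12(M+\bar M)$ on the complexification) yields $X_t\mathcal{Y}X_t^T\leq\mathcal{Y}$, since $\sigma$ is real antisymmetric so $i\sigma$ is Hermitian with zero real part. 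Hence $0\leq Y_t=\mathcal{Y}-X_t\mathcal{Y}X_t^T\leq\mathcal{Y}$, giving $\no{Y_t}\leq\no{\mathcal{Y}}$ immediately.

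For (ii)$\Rightarrow$(i): suppose $\no{Y_t}\leq c$ for all $t$. Write $\tilde A=(H-A)\sigma$ as in Prop.~\ref{prop:one-param-simple}, so $X_t=e^{-t\tilde A}$ and $\dot Y_0=\tilde B:=2\sigma^T B\sigma$. The idea is to produce $\mathcal{Y}$ as (a regularized version of) $\int_0^\infty X_s\,\tilde B\,X_s^T\,ds$ — formally $\mathcal{Y}=\int_0^\infty e^{-s\tilde A}\tilde B\,e^{-s\tilde A^T}\,ds$ — and then check it solves the Sylvester equation \eqref{eq:condition-for-simple-form}, $\tilde A\mathcal{Y}+\mathcal{Y}\tilde A^T=\tilde B$, which by the computation in the proof of Prop.~\ref{prop:one-param-simple} is exactly what makes $Y_t=\mathcal{Y}-X_t\mathcal{Y}X_t^T$ hold. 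The first task is convergence of this integral, which is where boundedness of $Y_t$ enters: one relates $Y_t$ to $\int_0^t X_s\tilde B X_s^T\,ds$ (indeed $Y_t=\int_0^t X_s\tilde B X_s^T\,ds$ from \eqref{eq:one-param-gen-Y}, reading off $\dot Y_s=X_s\tilde B X_s^T$ via the semigroup property $\dot Y_s = \dot Y_0$ conjugated, i.e. $\dot Y_s=X_s\dot Y_0 X_s^T$), so bounded $\no{Y_t}$ together with $Y_t\geq\nul$ and monotonicity ($Y_{t+s}-Y_t=X_t Y_s X_t^T\geq 0$) forces $Y_t$ to converge as $t\to\infty$ to some $\mathcal{Y}\geq\nul$. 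Then letting $t\to\infty$ in $Y_{t+s}=Y_t+X_tY_sX_t^T$ one needs $X_tY_sX_t^T\to 0$; combined with $Y_t\to\mathcal{Y}$ this will give, after a short argument (e.g. fixing $s$ and using that $Y_s\geq\nul$ and $\mathcal{Y}-Y_t\to\nul$), the identity $Y_s=\mathcal{Y}-X_s\mathcal{Y}X_s^T$ for all $s$, and positivity of $\mathcal{Y}$ is automatic as the limit of the increasing family of positive matrices $Y_t$.

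The step I expect to be the main obstacle is making rigorous the passage $t\to\infty$ in the semigroup relation, specifically controlling $X_tY_sX_t^T$. One does not know a priori that $\no{X_t}$ stays bounded — $X_t$ can have eigenvalues off the unit circle. The clean route is: from $Y_{t+s}=Y_t+X_tY_sX_t^T$ and $\sup_t\no{Y_t}\le c$ we get $\no{X_tY_sX_t^T}\le 2c$ for all $t,s$; fixing $s$ with $Y_s>\nul$ on its support and using that $Y_s$ restricted to $\ran Y_s$ is invertible, this caps $\no{X_t|_{\ran Y_s}}$; one then argues that on the complementary subspace the contribution is irrelevant because $X_sY_rX_s^T$ only ever probes directions in $\ran$ of the relevant $Y$'s. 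Alternatively, and perhaps more transparently, work at the level of the Sylvester equation: $\mathcal{Y}:=\lim_t Y_t$ exists by monotone boundedness, and differentiating $Y_t=\int_0^t X_r\tilde B X_r^T dr$ shows $\frac{d}{dt}(\mathcal{Y}-Y_t)=-X_t\tilde B X_t^T=\tilde A X_t(\mathcal{Y}-Y_t)$... one verifies $\mathcal{Y}$ satisfies \eqref{eq:condition-for-simple-form} by taking $t\to\infty$ in $\frac{d}{dt}Y_t=\tilde B + \tilde A Y_t + Y_t\tilde A^T - (\tilde A Y_t + Y_t \tilde A^T + \text{correction})$; concretely, since $Y_t$ converges its derivative tends to $0$ along a sequence, and $\dot Y_t = \tilde B - \tilde A Y_t - Y_t \tilde A^T$ is the linear ODE satisfied by the simple form, so the limit gives $\tilde A\mathcal{Y}+\mathcal{Y}\tilde A^T=\tilde B$ directly. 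Once \eqref{eq:condition-for-simple-form} holds for this $\mathcal{Y}$, uniqueness of solutions to the linear ODE $\dot Y_t=\tilde B-\tilde A Y_t-Y_t\tilde A^T$ with $Y_0=\nul$ forces $Y_t=\mathcal{Y}-X_t\mathcal{Y}X_t^T$, completing the proof.
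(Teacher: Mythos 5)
Your proof is essentially correct but follows a genuinely different route from the paper in the direction (ii)$\Rightarrow$(i). The paper takes an invariant mean $\mu$ of the semigroup $\real^+$ (a Banach-limit-type functional), sets $[\mathcal{Y}]_{ij}=\mu([Y_\cdot]_{ij})$, and applies $\mu$ in the variable $s$ to $Y_{t+s}=Y_t+X_tY_sX_t^T$; invariance and linearity give \eqref{eq:YY} in one line, and positivity of $\mu$ gives $\mathcal{Y}\geq\nul$. You instead take $\mathcal{Y}=\lim_{t\to\infty}Y_t$, which exists because $Y_{t+s}-Y_t=X_tY_sX_t^T\geq\nul$ makes $t\mapsto Y_t$ monotone in the semidefinite order and (ii) makes it bounded. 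Your construction is more elementary (no amenability) and more informative, since it exhibits $\mathcal{Y}$ as the asymptotic noise; the paper's is shorter and needs no monotonicity. Your (i)$\Rightarrow$(ii) direction is the same as the paper's, just stated in a roundabout way: $X_t\mathcal{Y}X_t^T\geq\nul$ alone gives $\nul\leq Y_t\leq\mathcal{Y}$, with no need to invoke \eqref{eq:cp-simple}.

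Two remarks on your execution of (ii)$\Rightarrow$(i). First, the obstacle you flag --- controlling $X_tY_sX_t^T$ as $t\to\infty$ --- is illusory: write the semigroup relation with the roles swapped, $Y_{s+t}=Y_s+X_sY_tX_s^T$, fix $s$ and let $t\to\infty$. Then only the \emph{fixed} matrix $X_s$ appears, and $Y_{s+t}\to\mathcal{Y}$, $Y_t\to\mathcal{Y}$ immediately yield $\mathcal{Y}=Y_s+X_s\mathcal{Y}X_s^T$, i.e.\ \eqref{eq:YY}. This makes your entire second workaround (Sylvester equation, ODE, passage to the limit of $\dot Y_t$) unnecessary. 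Second, if you do keep the ODE route, the step asserting $\dot Y_t=\tilde B-\tilde AY_t-Y_t\tilde A^T$ for the \emph{general} semigroup (not one already assumed to be of simple form) is stated but not justified; it does hold, because both $X_t\tilde BX_t^T$ and $\tilde B-\tilde AY_t-Y_t\tilde A^T$ satisfy the same linear ODE in $t$ with the same value at $t=0$, but as written that verification is missing and the displayed expression with the unspecified ``correction'' term is not a proof. With either the swap fix or that verification supplied, your argument closes.
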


\begin{proof}
Suppose that (i) holds. For every $t\in\real^+$, we then have
 \begin{equation}
  \nul \leq Y_t =\mathcal{Y}-X_t\mathcal{Y}X_t^T \leq \mathcal{Y} \, .
 \end{equation}

Suppose that (ii) holds. Let $\mu$ be an invariant mean of the
semigroup $\real^+$ (see e.g. \cite{AHA1}). By the assumption, each
matrix entry $t\mapsto [Y_t]_{ij}$ is a bounded continuous function.
We define the matrix $\mathcal{Y}$ by
\begin{equation*}
[\mathcal{Y}]_{ij} = \mu([Y_\cdot]_{ij}) \, .
\end{equation*}
As $\mu$ is invariant, an application to the second semigroup
condition $Y_{t+s}=Y_t + X_tY_sX_t^T$ gives the formula
\eqref{eq:YY}. Moreover, $\mathcal{Y}\geq\nul$ since $Y_t\geq\nul$
for each $t\in\real^+$ and
\begin{equation*}
 \ip{v}{\mathcal{Y}v}=\mu \left( \ip{v}{Y_{\cdot}v} \right) \geq 0 \, .
\end{equation*}
\end{proof}

Let $(X_t,Y_t)_{t\geq 0}$ be a one-parameter semigroup and suppose
there is a $\mathcal{Y}$ satisfying \eqref{eq:one-param-simple}. The
matrix $\mathcal{Y}$ is a valid covariance matrix (e.g., of a
Gaussian state) iff
\begin{equation}\label{eq:covariance}
\mathcal{Y}-i\sigma \geq 0 \, .
\end{equation}
Thus, in this situation the one-parameter semigroup
$(X_t,Y_t)_{t\geq 0}$ has an invariant Gaussian state.

This is the case if the semigroup $\left( X_{t}\right)
_{t\geq 0}$ is strictly contractive: $\left\Vert X_{t}\right\Vert
<1$ for $t>0.$ Then we have $\lim_{t\to\infty} X_t =0$ so that \eqref{eq:cp-simple} becomes \eqref{eq:covariance}.

\begin{example}
An \emph{attenuation channel} is of the form $X=\sqrt{\eta}\ \id$,
$Y=(1-\eta)\ \id$  for some $\eta\in (0,1)$.
Attenuation channels form a bounded one-parameter semigroup. Namely,
\begin{equation*}
X_t=e^{-t}\id \, , \quad Y_t=(1-e^{-2t})\id \, .
\end{equation*}
For this one-parameter semigroup we have $\mathcal{Y}=\id$.
Therefore, the  vacuum state (with covariance matrix $\Gamma=\id$
and first moments $d=0$) is an invariant state for the one-parameter
semigroup of attenuation channels.
\end{example}

We notice that a one-parameter semigroup $(X_t,Y_t)_{t\geq 0}$ may
have bounded noise term without having an invariant Gaussian state.
For a simple example suppose that $(X_t,Y_t)_{t\geq 0}$ is
a one-parameter semigroup of reversible channels. Then
$X_t=e^{-tH\sigma}$ for some symmetric matrix $H$ and $Y_t=0$. If
$X_t$ is not orthogonal (in which case $\mathcal{Y}\propto\id$) we
have $\mathcal{Y}=0$, which is clearly not a valid covariance
matrix.

\section{Infinitesimal divisible channels}\label{sec:infdiv}

 We call a Gaussian channel parameterized by $(X,Y)$
\emph{infinitesimal divisible} if either
\begin{itemize}
\item[(a)] for every $\varepsilon >0$ there exists a finite set of Gaussian
channels $(X_i,Y_i)$ such that:
\begin{itemize}
\item[(i)] $\no{(X_i,Y_i)-(\id,\nul)}<\varepsilon$
\item[(ii)] $\prod_i (X_i,Y_i)=(X,Y)$
\end{itemize}
or
\item[(b)] the channel can be approximated arbitrarily well with (a)-type of channels.
\end{itemize}

We note that in the classical case  \cite{Jo74} this is what is called limit of
a null triangular array, and it is proved there that such limits are
precisely solutions of time-dependent Kolmogorov equations.
Also in the quantum case solutions of time-dependent Markovian master equations are clearly infinitesimal divisible. For finite dimensional quantum systems the close relation between the two sets has been studied in \cite{WoCi08}.

The concatenation of two infinitesimal divisible channels is clearly
infinitesimal divisible. The infinitesimal divisible channels thus
form a subsemigroup of $\gauss$.

It is clear that if  $(X,Y)$ is an element of a one-parameter
semigroup of Gaussian channels, then it is infinitesimal divisible.
The converse is, however, not true. Namely, suppose that $S$ is a
symplectic matrix such that $S\notin \exp(\symp)$. Then the channel
$(S,0)$ is not an element of a one-parameter Gaussian semigroup, but
by Prop. \ref{prop:SS} we have symplectic matrices $S_1,S_2$ such
that $S=S_1S_2$ and $S_1,S_2\in\exp(\symp)$.
 Since the channels $(S_1,0)$ and $(S_2,0)$ are infinitesimal divisible,
 so is also $(S,0)$.

The following is a simple necessary condition for a channel to be
infinitesimal divisible.

\begin{proposition}\label{prop:infdiv}
If a Gaussian channel $(X,Y)$ is infinitesimal divisible, then $\det
X \geq 0$.
\end{proposition}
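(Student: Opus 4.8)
The plan is to exploit the semigroup product together with the continuity built into the definition of infinitesimal divisibility. Recall that under \eqref{eq:product} the $X$-component is multiplicative: if $(X,Y)=\prod_i(X_i,Y_i)$ then $X=\prod_i X_i$, so $\det X = \prod_i \det X_i$. For a channel close to the identity, case (a)(i) with $\varepsilon<1$ forces $\no{X_i - \id}<\varepsilon<1$, hence $X_i$ is invertible and, more importantly, $X_i$ lies in the connected component of $\id$ in $GL_{2n}(\real)$; in particular $\det X_i > 0$. Therefore any channel of type (a) satisfies $\det X = \prod_i \det X_i > 0 \geq 0$.

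It remains to handle case (b), where $(X,Y)$ is merely an approximation limit of type-(a) channels. Here I would argue by continuity of the determinant: if $(X,Y) = \lim_k (X^{(k)}, Y^{(k)})$ with each $(X^{(k)},Y^{(k)})$ of type (a), then $\det X^{(k)} \geq 0$ for every $k$ by the previous paragraph, and $\det X = \lim_k \det X^{(k)} \geq 0$ since the determinant is a continuous (polynomial) function of the matrix entries and the set $[0,\infty)$ is closed. This closes the argument.

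The only genuinely delicate point is pinning down exactly what the norm $\no{(X_i,Y_i)-(\id,\nul)}<\varepsilon$ controls: one needs it to imply $\no{X_i-\id}$ is small (which holds for any reasonable norm on the pair, e.g. treating $(X,Y)$ via its image $\pi(X,Y)$ or simply as a block vector). Once $\no{X_i - \id} < 1$ we may invoke the standard fact that the open unit ball around $\id$ in $M_{2n}(\real)$ consists of matrices with positive determinant — e.g. because $t\mapsto \det(\id + t(X_i-\id))$ is a polynomial in $t$ that is positive at $t=0$ and, since $\no{t(X_i-\id)}<1$ for $t\in[0,1]$, never vanishes on $[0,1]$, hence stays positive. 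This elementary observation, combined with multiplicativity of $\det$ and its continuity, is the whole proof; there is no serious obstacle, only the bookkeeping of the norm on pairs.
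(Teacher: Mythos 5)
Your proof is correct and is exactly the argument the paper intends — its own proof is the one-line remark that the claim ``follows from the continuity and multiplicativity of the determinant,'' which you have simply spelled out: factors with $\no{X_i-\id}<1$ have positive determinant, multiplicativity gives $\det X>0$ for type~(a) channels, and continuity of $\det$ handles the closure in case~(b). No gaps; your elaboration of why the unit ball around $\id$ has positive determinant is sound.
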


\begin{proof}
The claim follows from the continuity and multiplicativity of the
determinant.
\end{proof}

\begin{example}
The Gaussian channel describing a phase conjugating mirror (with
minimal noise) corresponds to the choices
\begin{equation*}
X= \id_n \otimes \left(\begin{array}{cc}1 & 0 \\0 &
-1\end{array}\right) \, ,  \quad Y=2 \id_{2n} \, .
\end{equation*}
If $n$ is odd, then $\det(X)<0$ and therefore $(X,Y)$ is not
infinitesimal divisible.
\end{example}

In the spirit of Corollary \ref{cor:Xlog} we can formulate a
converse of Prop.\ref{prop:infdiv}.

\begin{proposition}\label{prop:weakconv}
If $X\in\mnnr$ satisfies $\det(X)>0$, then there exists  a
$Y\in\mnnr$ such that $(X,Y)$ is an infinitesimal divisible Gaussian
channel.
\end{proposition}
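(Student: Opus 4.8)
The plan is to deduce the statement from Corollary~\ref{cor:Xlog} together with two facts already recorded in the excerpt: that every element of a one-parameter Gaussian semigroup is infinitesimal divisible, and that the infinitesimal divisible channels form a subsemigroup of $\gauss$. Note that $\det(X)>0$ is strictly weaker than condition (ii) of Corollary~\ref{cor:Xlog}: a non-singular matrix with positive determinant need not have a real logarithm (for instance $\diag(-1,-2)$). The point is that such an $X$ always \emph{factors} as a product of matrices that do.

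First I would take a polar decomposition $X=UP$ with $P=\sqrt{X^TX}\in\mnnr$ positive definite (since $X$ is non-singular) and $U\in\mnnr$ orthogonal. Because $\det(P)>0$ and $\det(X)>0$, we have $\det(U)=1$, so $U\in\SO$. The positive definite factor $P$ obviously has a real logarithm. The orthogonal factor $U$ has one as well: either cite surjectivity of $\exp$ onto the compact connected group $\SO$, or observe that $U$ is diagonalizable with unimodular spectrum and that $\det(U)=1$ forces the $(-1)$-eigenspace to be even-dimensional, so the Jordan-block criterion of Prop.~\ref{prop:logarithm} is met.

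Next, Corollary~\ref{cor:Xlog} provides matrices $Y_U,Y_P\in\mnnr$ with $(U,Y_U)$ and $(P,Y_P)$ belonging to one-parameter semigroups of Gaussian channels, hence infinitesimal divisible. Forming the semigroup product,
\begin{equation*}
(X,Y):=(U,Y_U)\cdot(P,Y_P)=\bigl(UP,\ Y_U+UY_PU^T\bigr),
\end{equation*}
yields a Gaussian channel which is infinitesimal divisible because the infinitesimal divisible channels are closed under concatenation. Thus $Y=Y_U+UY_PU^T$ does the job.

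The only step needing attention — the main obstacle — is the factorization: one must check that both polar factors satisfy the Jordan-block condition of Prop.~\ref{prop:logarithm}. For $P$ this is trivial, and for $U$ it reduces, via $\det(U)=1$ and the diagonalizability of orthogonal matrices, to the elementary fact that $-1$ occurs with even multiplicity; the remainder of the argument is a direct assembly of the cited results.
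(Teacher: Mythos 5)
Your proof is correct, and it reaches the conclusion by a genuinely different factorization than the paper's. The paper uses the real Jordan decomposition $X=MJM^{-1}$: it collects the Jordan blocks with negative eigenvalues into $J_-$, writes $X=X_1X_2$ with $X_1=M\left((-\id)\oplus\id\right)M^{-1}$ and $X_2=M\left((-J_-)\oplus J_r\right)M^{-1}$, and uses $\det(X)>0$ to conclude that the size of $J_-$ — i.e.\ the multiplicity of the eigenvalue $-1$ of the involution $X_1$ — is even, so that both factors satisfy the criterion of Prop.~\ref{prop:logarithm}. You instead use the polar decomposition $X=UP$, where positive definiteness of $P$ and $\det(U)=1$ (forced by $\det(P)>0$ and $\det(X)>0$) give real logarithms for both factors; your justification for $U$ — diagonalizability plus even multiplicity of $-1$, or surjectivity of $\exp$ onto the compact connected group $\SO$ — is sound. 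From there the two arguments finish identically: Corollary~\ref{cor:Xlog} supplies noise terms making each factor a member of a one-parameter Gaussian semigroup, and closure of the infinitesimal divisible channels under concatenation yields the claim, with $Y$ given by the semigroup product exactly as in the paper. Your route buys a slightly more elementary and canonical factorization (no real Jordan form needed); the paper's route makes the actual obstruction — odd-multiplicity negative Jordan blocks — more visible, which is the feature that fails when $\det(X)<0$.
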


\begin{proof}
Consider the real Jordan decomposition $X=MJM^{-1}$ and group the
Jordan  blocks in $J$ such that $J=J_-\oplus J_r$ with $J_-$ being
the collection of all Jordan blocks with negative eigenvalues and
$J_r$ containing all the others. By defining $X_1\equiv
M(-\id)\oplus\id M^{-1}$ and $X_2\equiv M(-J_-)\oplus J_r M^{-1}$ we
get $X=X_1X_2$. Since $\det(X)>0$, the multiplicity of the
eigenvalue $-1$ of $X_1$ is even. By Proposition 1, the $X_i$'s
now have real logarithms which implies by Corollary \ref{cor:Xlog}
that there are $Y_i$'s such that $(X_i,Y_i)$ are elements of a
one-parameter semigroup of Gaussian channels. Consequently, $(X,Y)$
is infinitesimal divisible if we choose $Y=Y_1+X_1Y_2X_1^T$.
\end{proof}

We also have the following simple observation.

\begin{proposition}
Let $(X,Y)$ be infinitesimal divisible Gaussian channel. Then every
Gaussian channel $(X,\widetilde{Y})$ with $\widetilde{Y}\geq Y$ is
infinitesimal divisible.
\end{proposition}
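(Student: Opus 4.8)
The plan is to reduce the general statement to the already–established decomposition behind Proposition~\ref{prop:weakconv}. Write $(X,\widetilde Y)$ with $\widetilde Y\geq Y$ and use the factorization $(X,Y)=(X_1,Y_1)\cdot(X_2,Y_2)$ furnished by the proof of Prop.~\ref{prop:weakconv}, where $(X_i,Y_i)$ lie in one‑parameter Gaussian semigroups and $Y=Y_1+X_1Y_2X_1^T$. The point is that the ``extra noise'' $\widetilde Y-Y\geq\nul$ can be absorbed into the first factor: set $\widetilde Y_1:=Y_1+(\widetilde Y-Y)$, so that $\widetilde Y_1\geq Y_1$, and observe that $(X_1,\widetilde Y_1)\cdot(X_2,Y_2)=(X_1X_2,\widetilde Y_1+X_1Y_2X_1^T)=(X,Y+(\widetilde Y-Y))=(X,\widetilde Y)$ by the semigroup product \eqref{eq:product}.

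Thus everything comes down to the following lemma‑type fact, which I would isolate as the first step: if $(X_1,Y_1)$ is infinitesimal divisible (indeed, an element of a one‑parameter semigroup), then so is $(X_1,\widetilde Y_1)$ for any $\widetilde Y_1\geq Y_1$ satisfying \eqref{eq:XYconstraint}. To see this, recall from the proof of Corollary~\ref{cor:Xlog} and Prop.~\ref{prop:one-param-gen} that $X_1=e^{t(A-H)\sigma}|_{t=1}$ for suitable $A,H$, and that for a fixed generator $(A-H)\sigma$ of $X_1$ the attainable noise matrices at time $1$ are exactly those of the form $2\int_0^1 X_s^T\sigma^T B\sigma X_s\,ds$ with $iA+B\geq\nul$. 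Since $Y_1$ arises this way for some admissible $B$, and since increasing $B$ by any positive matrix keeps $iA+B\geq\nul$ while only increasing the integral in the positive‑semidefinite order, one can in fact realize a continuum of noise matrices $\geq Y_1$ in one‑parameter semigroups sharing the $X$‑part $X_1$. Alternatively, and more cleanly, write $\widetilde Y_1=Y_1+N$ with $N\geq\nul$, note that $(\id,N)$ is itself infinitesimal divisible (it lies in the one‑parameter semigroup $Y_t=tN$, since $X_t=\id$ trivially satisfies the semigroup relations and the cp‑condition \eqref{eq:XYconstraint} reads $tN\geq\nul$, which holds), and compose: $(X_1,\widetilde Y_1)=(X_1,Y_1)\cdot(\id,N)$, a product of two infinitesimal divisible channels, hence infinitesimal divisible since these form a subsemigroup of $\gauss$.

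In fact this second route bypasses the factorization of Prop.~\ref{prop:weakconv} entirely: directly, $(X,\widetilde Y)=(X,Y)\cdot(\id,\widetilde Y-Y)$, and both factors are infinitesimal divisible---the first by hypothesis, the second because $(\id,N)$ with $N\geq\nul$ lies in the one‑parameter Gaussian semigroup $t\mapsto(\id,tN)$---so the claim follows from the fact that the infinitesimal divisible channels form a subsemigroup of $\gauss$, already observed in the text. The only mild subtlety, and the step I would be most careful about, is checking that $(\id,N)$ genuinely satisfies the one‑parameter semigroup axioms (i)--(iii): continuity and the initial condition $Y_0=\nul$ are immediate, and the semigroup relation $Y_{t+s}=Y_t+X_tY_sX_t^T$ becomes $(t+s)N=tN+\id\,(sN)\,\id^T$, which is trivially true; the cp‑condition \eqref{eq:XYconstraint} for $(\id,tN)$ is $tN\geq i(\sigma-\id\,\sigma\,\id^T)=\nul$, valid for $t\geq0$ since $N\geq\nul$. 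So $(\id,N)$ is even an element of a one‑parameter semigroup, a fortiori infinitesimal divisible, and the proof concludes.
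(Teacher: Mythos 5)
Your final, ``direct'' route is essentially the paper's own proof: the paper simply splits the extra noise into $m$ small pieces, $\prod_{j=1}^m (\id,\tfrac1m(\widetilde Y-Y))=(\id,\widetilde Y-Y)$, which is the discrete version of your observation that $(\id,N)$ lies in the one-parameter semigroup $t\mapsto(\id,tN)$, and then invokes the subsemigroup property. So the idea is right, but you have the factors in the wrong order in every place where you compose with $(\id,N)$: by \eqref{eq:product}, $(X,Y)\cdot(\id,N)=(X,\,Y+XNX^T)$, which is \emph{not} $(X,Y+N)$ unless $X$ commutes suitably with $N$; the identity you need is $(\id,N)\cdot(X,Y)=(X,\,N+Y)=(X,\widetilde Y)$. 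The same swap is needed in your claim $(X_1,\widetilde Y_1)=(X_1,Y_1)\cdot(\id,N)$. This is a one-line fix, but as written the displayed factorizations are false.

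Two further remarks. First, your opening reduction through the factorization in Prop.~\ref{prop:weakconv} is both unnecessary and not available in general: that construction assumes $\det X>0$, whereas an infinitesimal divisible channel only satisfies $\det X\geq 0$ (Prop.~\ref{prop:infdiv}), so the case $\det X=0$ would be left uncovered. Second, the intermediate argument about enlarging $B$ in \eqref{eq:one-param-gen-Y} only produces \emph{some} noise matrices above $Y_1$ (those of the form $Y_1+2\int_0^1 X_s^T\sigma^T\Delta B\,\sigma X_s\,ds$), not an arbitrary $\widetilde Y_1\geq Y_1$, so it does not prove the lemma you state there; fortunately you discard it in favor of the composition argument, which (after correcting the order) is complete and is exactly the paper's proof.
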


\begin{proof}
We can split the additional noise $\widetilde{Y}-Y$ into arbitrarily
small pieces, $\prod_{j=1}^m (\id, \frac{1}{m} (\widetilde{Y}-Y)) =
(\id, \widetilde{Y}-Y)$.
\end{proof}

\section{Divisible channels}\label{sec:divisible}

 We call a Gaussian channel $(X,Y)$ \emph{divisible} if there exist two non-reversible
 Gaussian channels $(X_1,Y_1)$ and $(X_2,Y_2)$ such that
\begin{equation*}
(X_1,Y_1)\cdot(X_2,Y_2)=(X,Y) \, .
\end{equation*}
In this section we show that actually all non-reversible Gaussian
channels are divisible---in spite of the existence of Gaussian channels which are not infinitesimal divisible. This is in sharp contrast to the finite
dimensional case, where one has also indivisible channels
\cite{WoCi08}.

The main tool which we use to prove the result is a surjective
mapping from $\gauss$ onto the cone of positive matrices in $\mnnc$.
Thus we define the mapping $p$ in the following way:
\begin{equation*}
p: \mnnr \times \mnnr \to \mnnc \, , \quad (X,Y) \mapsto i (X \sigma
X^T - \sigma) + Y\, .
\end{equation*}
A comparison of this definition with the cp-condition
\eqref{eq:XYconstraint} shows  that $p(X,Y) \geq \nul$ iff
$(X,Y)\in\gauss$. Moreover, $p(X,Y)=\nul$ iff $X\in\Symp$ and
$Y=\nul$. Thus, the kernel of $p$ is exactly the set of reversible
elements of $\gauss$.

An essential property of $p$ is the fact each positive matrix $P$ is
an image of a  Gaussian channel $(X,Y)$. This property is proved in
the following two lemmas. This first lemma is a standard result in linear algebra, but we give a proof for the reader's convenience.

\begin{lemma}\label{lemma:anti}
Suppose $M\in\mnnr$ is anti-symmetric. Then it can be written in the
form $M=N \sigma N^T$ for some $N\in\mnnr$.
\end{lemma}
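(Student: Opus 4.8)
The plan is to reduce $M$ to a canonical normal form by an orthogonal congruence and then exhibit the desired $N$ block by block. First I would invoke the standard normal form for real anti-symmetric matrices: there exists an orthogonal matrix $O\in\mathbb{R}^{2n\times 2n}$ and nonnegative reals $\lambda_1,\dots,\lambda_n$ such that
\begin{equation*}
O^T M O = \bigoplus_{k=1}^n \lambda_k \sigma_1 \, ,
\end{equation*}
where $\sigma_1=\left(\begin{smallmatrix}0 & 1\\-1 & 0\end{smallmatrix}\right)$ is the block defining $\sigma$. (Some $\lambda_k$ may be zero if $M$ is singular; this causes no trouble.) This is exactly the real analogue of the spectral theorem applied to the symmetric matrix $iM$, and it is the one ingredient I would quote rather than prove.

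Next, observe that for a single $2\times 2$ block we have $\lambda_k\sigma_1 = (\sqrt{\lambda_k}\,\id_2)\,\sigma_1\,(\sqrt{\lambda_k}\,\id_2)^T$, since $\id_2$ is symmetric. Assembling these blocks, set $D=\bigoplus_{k=1}^n \sqrt{\lambda_k}\,\id_2$, so that $D$ is symmetric and $D\sigma D^T = \bigoplus_k \lambda_k\sigma_1 = O^T M O$. Therefore $M = O D \sigma D^T O^T = (OD)\,\sigma\,(OD)^T$, and we may take $N := OD\in\mathbb{R}^{2n\times 2n}$. This completes the construction.

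There is essentially no obstacle here beyond correctly invoking the block-diagonalization result; the one point requiring a little care is matching the canonical block $\sigma_1$ used in the normal form with the fixed $\sigma=\bigoplus_i\sigma_1$ of the paper, so that the direct sum $\bigoplus_k \lambda_k\sigma_1$ really equals a matrix of the form $D\sigma D^T$ with $D$ built from $2\times 2$ scalar blocks — this works precisely because $\sigma$ itself is block-diagonal with identical $\sigma_1$ blocks, so $D$ commutes with the block structure. If one preferred to avoid the normal form entirely, an alternative is to write $iM = P_+ - P_-$ with $P_\pm\geq 0$ real-linear combinations of spectral projections, but that route is strictly more work, so I would stick with the congruence argument above.
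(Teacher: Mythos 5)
Your proof is correct and follows essentially the same route as the paper's: both reduce $M$ by an orthogonal congruence to a direct sum of blocks $\lambda_k\sigma_1$ and then absorb the scalars $\sqrt{\lambda_k}$ into $N$. The only cosmetic difference is that the paper starts from the real normal form for normal matrices and then specializes (treating $1\times 1$ zero blocks separately), whereas you quote the anti-symmetric normal form directly; the substance is identical.
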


\begin{proof}
 Since $M$ is
normal, there is an orthogonal real matrix $R$ such that $R^TMR$ is
a block diagonal matrix where each block is a 1-by-1 matrix or a
2-by-2 matrix of the form
\begin{equation}\label{eq:2-by-2}
\left(\begin{array}{cc}a & b \\-b & a\end{array}\right) \, , \quad
a,b\in\real, b >0 \, .
\end{equation}
As $M$ is anti-symmetric every 1-by-1 block has to be $0$ and every
2-by-2 block of the form \eqref{eq:2-by-2} has $a=0$. For this kind
of 2-by-2 matrix we can write
\begin{equation*}
\left(\begin{array}{cc}0 & b \\-b & 0\end{array}\right) =
\left(\begin{array}{cc}\sqrt{b} & 0 \\ 0 &
\sqrt{b}\end{array}\right) \left(\begin{array}{cc} 0 & 1 \\-1 &
0\end{array}\right) \left(\begin{array}{cc} \sqrt{b} & 0 \\ 0 &
\sqrt{b} \end{array}\right) \, .
\end{equation*}
\end{proof}

\begin{lemma}
Let $P\in\mnnc$ and $P\geq\nul$. There is $(X,Y)\in\gauss$ such that
$p(X,Y)=P$.
\end{lemma}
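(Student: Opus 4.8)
The plan is to decompose $P$ into its real and imaginary parts and then read off $X$ and $Y$ almost directly, using Lemma \ref{lemma:anti}. Since $P\geq\nul$ is in particular Hermitian, write $P=S+iA$ with $S,A$ real $2n\times 2n$-matrices; Hermiticity forces $S^T=S$ and $A^T=-A$. For any real matrix $X$ the term $i(X\sigma X^T-\sigma)$ occurring in $p(X,Y)$ is Hermitian with purely imaginary entries, because $X\sigma X^T$ is real and anti-symmetric. Hence matching real and imaginary parts in the equation $p(X,Y)=P$ suggests taking $Y:=S$ and looking for a real $X$ with $X\sigma X^T-\sigma=A$, i.e. $X\sigma X^T=\sigma+A$.

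Next, observe that $\sigma+A$ is again real and anti-symmetric, so Lemma \ref{lemma:anti} provides a real matrix $X\in\mnnr$ with $X\sigma X^T=\sigma+A$. For this $X$ together with $Y:=S$ one computes $p(X,Y)=i\big((\sigma+A)-\sigma\big)+S=S+iA=P$.

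Finally --- and this is the only place where the hypothesis $P\geq\nul$ enters --- I must verify that the pair $(X,Y)$ just constructed is actually a Gaussian channel, i.e. that it satisfies the cp-condition \eqref{eq:XYconstraint}. For our choice this condition reads $S\geq i(\sigma-X\sigma X^T)=i\big(\sigma-(\sigma+A)\big)=-iA$, which is precisely $S+iA=P\geq\nul$. Thus $(X,Y)\in\gauss$ and $p(X,Y)=P$, as required.

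I do not anticipate any genuine obstacle: once Lemma \ref{lemma:anti} is in hand the argument is a two-line computation. The one point worth stressing is structural rather than technical --- the cp-condition, when restricted to pairs of the form $(X,S)$ with $X\sigma X^T=\sigma+A$, is \emph{equivalent} to positivity of $P=S+iA$, so $p$ not only takes values in the cone of positive matrices (as noted right after its definition) but in fact maps $\gauss$ onto it with no further constraint.
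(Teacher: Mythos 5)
Your proof is correct and follows essentially the same route as the paper: decompose $P$ into real symmetric and imaginary anti-symmetric parts, apply Lemma \ref{lemma:anti} to the anti-symmetric matrix $\sigma+A$ to obtain $X$, and set $Y$ equal to the real part. The only difference is that you spell out the verification of the cp-condition, which the paper leaves implicit via the earlier remark that $p(X,Y)\geq\nul$ iff $(X,Y)\in\gauss$.
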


\begin{proof}
Let us write $P=P_R + i P_I$, where $P_R,P_I\in\mnnr$.  Since $P\geq
\nul$, $P_R$ is symmetric positive and $P_I$ is
anti-symmetric. Also the matrix $P_I + \sigma$ is anti-symmetric,
and by Lemma \ref{lemma:anti} it can thus be written in the form
\begin{equation*}
P_I + \sigma = X \sigma X^T \, ,
\end{equation*}
where $X\in\mnnr$. Choosing $Y=P_R$ the claim is proven.
\end{proof}

Each positive matrix $P$ represents an equivalence class of Gaussian
channels rather than a single channel. Namely, for two Gaussian
channels $(X_1,Y_2)$ and $(X_2,Y_2)$ we have
\begin{equation*}
p(X_1,Y_1)=p(X_2,Y_2) \qquad\textrm{iff}\qquad Y_1=Y_2 \quad
\textrm{and}\quad X_1\sigma X_1^T = X_2 \sigma X_2^T \, .
\end{equation*}
If we concatenate two Gaussian channels corresponding to positive
matrices  $P_1$ and $P_2$ we obtain a Gaussian channel in the
equivalence class
\begin{equation}
P = P_1 + X_1 P_2 X_1^T \, .
\end{equation}

With this preparation we are now ready to prove that every
non-reversible Gaussian  channel is divisible.

\begin{proposition}
Let $(X,Y)\in\gauss$ be non-reversible. There exist non-reversible
$(X_1,Y_1),(X_2,Y_2)\in\gauss$ such that $(X,Y)=(X_1,Y_1)\cdot
(X_2,Y_2)$.
\end{proposition}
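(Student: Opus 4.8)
The plan is to work on the level of the positive matrices $P = p(X,Y)$ and reduce the divisibility question to a statement about splitting a positive matrix in $\mnnc$. Since $(X,Y)$ is non-reversible, $P := p(X,Y)$ is a nonzero positive matrix. The idea is to write $P = P_1 + X_1 P_2 X_1^T$ with $P_1, P_2$ both nonzero positive matrices and then lift this back to genuine Gaussian channels using the preceding two lemmas: choose $(X_1,Y_1)$ with $p(X_1,Y_1)=P_1$ and $(X_2,Y_2)$ with $p(X_2,Y_2)=P_2$, so that the concatenation $(X_1,Y_1)\cdot(X_2,Y_2)$ lands in the equivalence class of $P$, hence equals $(X,Y)$ after possibly adjusting within the class (recall $p(X_1',Y_1')=p(X_1,Y_1)$ iff $Y_1'=Y_1$ and $X_1'\sigma X_1'^T = X_1\sigma X_1^T$). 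Because $P_1\neq\nul$ and $P_2\neq\nul$, both factors are non-reversible, which is exactly what we need.

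The simplest way to produce such a splitting: pick a rank-one (or otherwise ``small'') positive matrix $Q\le P$, set $P_2 := Q$ and $P_1 := P - X_1 Q X_1^T$ — but this is circular since $X_1$ depends on $P_1$. A cleaner route is to split additively and put the whole $X$-twist on one side. Concretely, first realize $(X,Y)$ as $(X,Y) = (X, Y') \cdot (\id, Y'')$ with $Y = Y' + X Y'' X^T$ by choosing $Y'' $ a small positive matrix with $Y'' \le $ (something ensuring $Y' = Y - X Y'' X^T \ge \nul$ still yields a valid channel, i.e. $p(X,Y')\ge\nul$). The second factor $(\id,Y'')$ is non-reversible as soon as $Y''\neq\nul$. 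For the first factor to be non-reversible we need $p(X,Y')\neq\nul$; since $p(X,Y') = P - Y''$, we just need $Y''$ small enough that $P - Y''\neq\nul$ while $P-Y''\ge\nul$ — possible precisely because $P\neq\nul$ (if $P$ has a nonzero real part, shrink $Y''$; if $P$ is purely imaginary nonzero, any nonzero $Y''\le\varepsilon P_{\text{something}}$... ) — this is the delicate case and the main obstacle: when $P = p(X,Y)$ is purely imaginary (i.e. $Y=\nul$ but $X\notin\Symp$), we cannot peel off a noise term $Y''$ at all.

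So the argument splits into two cases. \textbf{Case 1: $Y\neq\nul$.} Write $(X,Y) = (X, Y - \tfrac{1}{m} Y)\cdot(\id,\tfrac{1}{m} Y)$ — wait, this needs $Y - \tfrac1m X Y X^T \ge$ valid; more safely, for $m$ large, $(\id,\tfrac1m Y)$ is non-reversible and $(X, Y-\tfrac1m X Y X^T)$ satisfies $p(X,\cdot) = P - \tfrac1m \tilde Y \ge \nul$ for some positive $\tilde Y$ with $P-\tfrac1m\tilde Y\neq\nul$, hence is non-reversible; done. \textbf{Case 2: $Y=\nul$, $X\notin\Symp$.} Then $P = i(X\sigma X^T - \sigma)\neq\nul$. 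Here the trick is to \emph{add and subtract} noise: write $(X,\nul) = (X, N)\cdot(\id, -N')$? — but negative noise is illegal. Instead, insert an intermediate amplification-type factor: $(X,\nul) = (X, N)\cdot (X_2, Y_2)$ where we first split $X = X X_2^{-1}\cdot X_2$ for a suitable $X_2$ and distribute; since $p$ is surjective onto the positive cone, choose $P_2 := \tfrac12 P$, realize it as $p(X_2,Y_2)$ via the lemmas, set $X_1$ so that $X_1 X_2 = X$ and $X_1\sigma X_1^T$ matches $P - X_1 P_2 X_1^T$; one checks the fixed-point equation for $X_1$ is solvable because $P$ is positive and the map $X_1\mapsto P - X_1P_2X_1^T$ can be arranged positive. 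The hard part is exactly this last consistency: ensuring the two constraints ``$X_1 X_2 = X$'' and ``$p(X_1,Y_1)+X_1 P_2 X_1^T = P$'' are simultaneously satisfiable with $Y_1\ge$ the cp-bound; I expect this to follow by taking $X_2$ a small symplectic-free perturbation of $\id$ (so $X_1\approx X$) and using continuity/surjectivity of $p$, but pinning down the estimate is where the real work lies.
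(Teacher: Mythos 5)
Your framework---passing to $P=p(X,Y)$, splitting $P=P_1+X_1P_2X_1^T$ with both summands nonzero, and lifting back via the surjectivity of $p$ together with the equivalence-class argument---is exactly the right one, and your observation that $P_i\neq\nul$ forces non-reversibility of the factors is correct. The concrete splitting you propose in Case 1, however, fails. Peeling off noise, i.e.\ writing $(X,Y)=(X,Y')\cdot(\id,Y'')$ or $(\id,Y'')\cdot(X,Y')$, requires a nonzero \emph{real} positive matrix ($XY''X^T$, resp.\ $Y''$) lying below $P$ in the positive-semidefinite order, hence with range contained in $\ran(P)$. For minimal-noise channels no such matrix exists: take the single-mode attenuation channel $X=\sqrt{\eta}\,\id$, $Y=(1-\eta)\id$ with $\eta\in(0,1)$, for which $P=(1-\eta)(\id-i\sigma_1)$ is rank one with range $\complex\cdot(1,i)^T$; this subspace contains no nonzero real vector, whereas any nonzero real positive $Q$ has one in its range, so $Q\leq P$ is impossible. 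In particular $P-\tfrac{1}{m}XYX^T$ is never positive for this channel, and your Case 1 already breaks on one of the most basic irreversible Gaussian channels. Your Case 2 is, as you concede, only a sketch of a fixed-point argument that is never closed, and the degenerate case $\det X=0$ is not addressed at all.

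The missing idea is to scale the \emph{whole} complex positive matrix rather than only its real part: set $P_1=\varepsilon P$ and pick any $(X_1,Y_1)$ with $p(X_1,Y_1)=P_1$. Since $(\det X_1)^2=\det\left(\varepsilon X\sigma X^T+(1-\varepsilon)\sigma\right)$ tends to $1$ as $\varepsilon\to 0$, one can choose $\varepsilon$ with $X_1$ invertible, and then $P_2:=(1-\varepsilon)X_1^{-1}PX_1^{-T}$ is automatically positive and satisfies $P_1+X_1P_2X_1^T=P$---no residual positivity condition remains to be checked, which is precisely what defeats the attenuation-channel obstruction. This divides some channel $(\tilde X,Y)$ in the class of $P$ non-trivially, and since $X\sigma X^T=\tilde X\sigma\tilde X^T$ forces $S:=X^{-1}\tilde X\in\Symp$, appending $(S^{-1},0)$ transports the division to $(X,Y)$ itself. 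The case $\det X=0$ is then handled separately and trivially by writing $(X,Y)=(X,Y)\cdot(\id,Y_2)$ with $Y_2$ the projector onto $\ker X$, so that $XY_2X^T=0$.
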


\begin{proof}
Let us first consider the case $\det X = 0$. Choose $X_1=X$,
$Y_1=Y$, $X_2=\id$ and $Y_2$ the projector onto the kernel of $X$.
In this way we can write the channel $(X,Y)$ as a concatenation of
two non-reversible channels.

Let us  then
suppose $\det X \neq 0$. Let $P=p(X,Y)$. We denote $P_1=\varepsilon
P$, and fix a pair $(X_1,Y_1)$ such that $p(X_1,Y_1)=P_1$. It is
possible to choose $0<\varepsilon <1$ such that $\det X_1 \neq 0$.
Indeed, by its definition $X_1$ satisfies
\begin{equation}
 X_1 \sigma X_1^T = \varepsilon X \sigma X^T + (1-\varepsilon) \sigma \, ,
\end{equation}
hence we get
\begin{equation*}
(\det X_1)^2 = \det (X_1 \sigma X_1^T) =  \det( \varepsilon X \sigma
X^T + (1-\varepsilon) \sigma ) \, .
\end{equation*}
For $\varepsilon=0$ the right hand side is 1, and from the
continuity  of the determinant follows that $\det X_1 \neq 0$ for
some $0<\varepsilon <1$.

We then define
\begin{equation*}
P_2 := (1-\varepsilon) X_1^{-1}PX_1^{-T} \, .
\end{equation*}
It follows that
\begin{equation*}
X_1 P_2 X_1^T + P_1 = P \, .
\end{equation*}
This shows that in the equivalence class of Gaussian channels
represented by  $P$, there is at least one channel $(\tilde{X},Y)$
which can be divided non-trivially. So it remains to prove that this
holds then for all Gaussian channels in the equivalence class.

As we have noticed earlier, $(X,Y)$ and $(\tilde{X},Y)$ are in the
same equivalence class  iff $X\sigma X^T = \tilde{X} \sigma
\tilde{X}^T$. Thus, $S:= X^{-1}\tilde{X} \in\Symp$ and
$X=\tilde{X}S^{-1}$. Therefore, a decomposition for $(\tilde{X},Y)$
also leads to a decomposition of $(X,Y)$.

Since $Y_1=\varepsilon Y\neq\nul$, the channel $(X_1,Y_1)$ is
non-reversible. Also the channel $(X_2,Y_2)$ has to be
non-reversible as $Y_1 \neq Y$ implies  that $Y_2\neq\nul$.
\end{proof}

Note that $(X_1,Y_1)$ can be chosen arbitrary close to the ideal channel $(\id,0)$. That is, form an arbitrary irreversible Gaussian channel we can `chop off' an infinitesimal (irreversible) piece so that the remaining part is still a valid Gaussian quantum channel. The possibility of iterating this procedure, i.e., chopping off an infinitesimal pieces from the remainder and so further, might suggest that every Gaussian channel is infinitesimal divisible. This intuition, however, fails since the remaining channel (after having chopped off a piece) is, in the Gaussian context, not necessarily closer to the ideal channel (as it would be in the finite dimensional context). In fact, for Gaussian channel which are not infinitesimal divisible this procedure can bring us further and further away from the identity. This might be seen as a signature of the non-compactness of the set of Gaussian channels (as opposed to the compactness in the finite-dimensional context).

\section{Idempotent channels}\label{sec:idempotent}

A Gaussian channel $(X,Y)$ is \emph{idempotent} if
\begin{equation*}
(X,Y)\cdot(X,Y)=(X,Y) \, .
\end{equation*}
This leads to the requirements $X^2=X$ and $XYX^T=0$. Since $Y$ is
positive, the second condition can be written in the form
$(X\sqrt{Y})(X\sqrt{Y})^T=0$, which is equivalent to $XY=0$.
Therefore, we conclude that $(X,Y)$ is idempotent iff
\begin{equation}\label{eq:idempotent-conditions}
X^2=X \, , \quad XY=0 \, .
\end{equation}
In physical terms, idempotency means that a repeated use of the
channel does not  change the system any further.

Suppose that $X$ is invertible. Then the conditions
\eqref{eq:idempotent-conditions} imply that $(X,Y)=(\id,\nul)$,
which is just the identity channel. However, there are also other
idempotent channels as illustrated in the following example.

\begin{example}\label{ex:idempotent}
Let $X$ and $Y$ be diagonal matrices of the form
\begin{equation*}
X=\diag(\overset{2k}{\overbrace{1,\ldots,1}},\overset{2n-2k}{\overbrace{0,\ldots,0}})
\, , \quad Y=\diag(\overset{2k}{ \overbrace{0,\ldots
,0}},\overset{2n-2k}{\overbrace{y_{1},y_{1},\ldots,y_{n-k},y_{n-k}}})
\, ,
\end{equation*}
where $y_{j}\geq 1;j=1,\dots ,n-k$. Then the pair $(X,Y)$ clearly
satisfies \eqref{eq:idempotent-conditions}.  The cp-condition
\eqref{eq:XYconstraint} breaks into conditions for $2\times
2$-matrices,
\begin{equation*}
i \sigma_1 \geq i\sigma_1 \, , \quad y_j \id \geq i\sigma_1 \, ,
\end{equation*}
which obviously hold. The channel $(X,Y)$ corresponds to a
transformation where we do nothing for the  first $2k$ modes but for
the rest $2n-2k$ modes we do a state preparation (see Example
\ref{ex:preparation}).
\end{example}

Let $S\in\Symp$. If we concatenate a channel $(X,Y)$ with the
reversible channels corresponding to $S$ and $S^{-1}$, we get
\begin{equation}
(S,0) \cdot (X,Y) \cdot (S^{-1},0) = (SXS^{-1},SYS^T) \, .
\end{equation}
It is easy to verify that if $(X,Y)$ is idempotent, then also
$(SXS^{-1},SYS^T)$ is idempotent. Therefore, Example
\ref{ex:idempotent} generates a full class of idempotent channels.
In the following we show that actually all idempotent channels are
of that form.

\begin{proposition}
A Gaussian channel $(X,Y)$ is idempotent iff there is a symplectic
matrix $S$ such that
\begin{equation}\label{eq:x-diag}
SXS^{-1}=\diag(\overset{2k}{\overbrace{1,\ldots
,1}},\overset{2n-2k}{ \overbrace{0,\ldots ,0}}) \, ,
\end{equation}
\begin{equation}\label{eq:can}
SYS^T=\diag(\overset{2k}{ \overbrace{0,\ldots
,0}},\overset{2n-2k}{\overbrace{y_{1},y_{1},\ldots
,y_{n-k},y_{n-k}}})\,,
\end{equation}
where $y_{j}\geq 1;j=1,\dots ,n-k$.
\end{proposition}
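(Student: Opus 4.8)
The plan is to prove the nontrivial direction: given an idempotent Gaussian channel $(X,Y)$, produce a symplectic $S$ bringing $X$ and $Y$ simultaneously into the stated diagonal forms. (The converse is exactly Example \ref{ex:idempotent} together with the observation that $(SXS^{-1},SYS^T)$ is idempotent whenever $(X,Y)$ is.) By \eqref{eq:idempotent-conditions} we have $X^2=X$ and $XY=0$. Since $X$ is an idempotent matrix, $\real^{2n}=\ran X\oplus\ker X$, with $\dim\ran X=:m$. The first task is to show $m=2k$ is even and that $\ker X$ is a symplectic-friendly complement. Note $XY=0$ forces $\ran Y\subseteq\ker X$; combined with $Y\geq 0$ and the cp-condition $Y\geq i(\sigma-X\sigma X^T)$, restricting to $\ran X$ gives $0=P Y P\ge i P(\sigma-X\sigma X^T)P$ where $P$ is the projection onto $\ran X$ along $\ker X$ — but $X|_{\ran X}=\id$, so on $\ran X$ we get $0\ge i(\sigma_{\ran X}-\sigma_{\ran X})=0$, which is vacuous; the useful content is that $\sigma$ must restrict nondegenerately to $\ran X$. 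More carefully, one shows $\ran X$ is a symplectic subspace (the form $\xi^T\sigma\eta$ is nondegenerate on it): if some $v\in\ran X$ were $\sigma$-orthogonal to all of $\ran X$, feeding $v$ and vectors of $\ker X$ into \eqref{eq:XYconstraint} produces a contradiction with $Y\ge 0$ via the standard argument that a degenerate direction of $\sigma-X\sigma X^T$ paired against positivity of $Y$ pins down a one-dimensional problem with no solution. This is the step I expect to be the main obstacle: correctly extracting, from the single inequality \eqref{eq:XYconstraint} together with $X^2=X$, $XY=0$, the geometric fact that $\ran X$ is symplectic (hence even-dimensional, $m=2k$) and that one may choose the complement $\ker X$ to also be symplectic.

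Granting that, the construction of $S$ proceeds as follows. Since $\ran X$ is a $2k$-dimensional symplectic subspace, choose a symplectic basis for it; its symplectic complement $(\ran X)^{\perp_\sigma}$ is then a $2(n-k)$-dimensional symplectic subspace, and I claim $\ker X=(\ran X)^{\perp_\sigma}$. Indeed $\ran Y\subseteq\ker X$ and one checks using \eqref{eq:XYconstraint} again that $\ker X$ cannot meet $\ran X$ and must be $\sigma$-orthogonal to $\ran X$ (any overlap between $\ker X$ and $(\ran X)^{\perp_\sigma}{}^c$ would violate the cp-inequality when combined with the vanishing of $Y$ on $\ran X$). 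Pick a symplectic basis for $\ker X$. The union of the two bases is a symplectic basis of $\real^{2n}$, and letting $S\in\Symp$ be the change-of-basis matrix to the standard symplectic basis (ordered so that the $\ran X$ vectors come first), we get $SXS^{-1}=\diag(1,\ldots,1,0,\ldots,0)$ with $2k$ ones, which is \eqref{eq:x-diag}.

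It remains to bring $Y$ to the form \eqref{eq:can}. After the above conjugation, replace $(X,Y)$ by $(SXS^{-1},SYS^T)$ and rename; so $X=\id_{2k}\oplus 0$. From $XY=0$ we get $Y=0_{2k}\oplus Y'$ with $Y'\in M_{2(n-k)}(\real)$ symmetric positive, and the cp-condition \eqref{eq:XYconstraint} restricted to the last $2(n-k)$ coordinates reads $Y'\geq i(\sigma_{n-k}-0)=i\sigma_{n-k}$, i.e.\ $Y'$ is a valid covariance matrix. By Williamson's theorem there is a symplectic $S'\in Sp(2(n-k))$ with $S'Y'S'^T=\diag(y_1,y_1,\ldots,y_{n-k},y_{n-k})$ and $y_j\geq 1$ (the symplectic eigenvalues are $\geq 1$ precisely because $Y'+i\sigma_{n-k}\geq 0$, equivalently $Y'\ge i\sigma_{n-k}$). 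Conjugating by $\id_{2k}\oplus S'\in Sp(2n)$ fixes $X$ (since it acts as identity on the first block and as a symplectic map on the kernel block where $X$ is zero) and sends $Y$ to the desired form \eqref{eq:can}. Composing this with the earlier $S$ gives the required global symplectic matrix, completing the proof.
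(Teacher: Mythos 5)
Your overall architecture matches the paper's: establish that the idempotent $X$ splits phase space into two mutually $\sigma$-orthogonal symplectic subspaces, then apply Williamson's theorem to the block of $Y$ supported on the kernel. The converse direction and the final Williamson step are handled correctly. However, the step you yourself flag as ``the main obstacle''---showing that the range and kernel of the idempotent are symplectically orthogonal (hence each a symplectic subspace, which is what forces the rank to be even)---is precisely the nontrivial content of the proposition, and your proposal does not actually carry it out: ``feeding $v$ and vectors of $\ker X$ into \eqref{eq:XYconstraint} produces a contradiction \ldots\ via the standard argument'' is a placeholder, not an argument. There is also an imprecision in the sketch: you invoke ``the vanishing of $Y$ on $\ran X$'', but the hypotheses give you $XY=0$, which together with $Y=Y^T$ yields $YX^T=0$, i.e.\ $Y$ vanishes on $\ran X^T=(\ker X)^{\perp}$, not on $\ran X$. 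This is exactly why the paper works with the decomposition $v=X^Tv+(\id-X^T)v$ rather than with $\ran X\oplus\ker X$ directly.

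The missing computation is short and worth recording. Put $v_1=X^Tv\in V_1=\ran X^T$ and $v_2=(\id-X^T)w\in V_2=\ker X^T$, so that $Yv_1=0$. Evaluate the cp-condition \eqref{eq:XYconstraint} on the complex vector $v_1+iv_2$: since $X^T(v_1+iv_2)=v_1$ and $v_1^T\sigma v_1=0$, the right-hand side collapses to $i\,(v_1-iv_2)^T\sigma(v_1+iv_2)=\pm 2\,v_1^T\sigma v_2$, while the left-hand side equals $v_2^TYv_2$, which is independent of $v_1$. Replacing $v_1$ by $\lambda v_1$ with $\lambda\in\real$ arbitrary forces $v_1^T\sigma v_2=0$. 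This one inequality is what makes $V_1$ and $V_2$ mutually $\sigma$-orthogonal symplectic subspaces (from which the corresponding statement for $\ran X$ and $\ker X$, which your construction of $S$ needs, can then be deduced). Without this step your proof does not get off the ground; with it, the remainder of your assembly---a symplectic basis adapted to the splitting, then Williamson on the block $Y'$ with symplectic eigenvalues at least $1$ because $Y'\geq i\sigma$---goes through and agrees in substance with the paper's proof.
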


\begin{proof}
Let us first show that $X^T$ is a symplectic projection, i.e.  the
symplectic space $\left( \mathbb{R}^{2n},\sigma \right) $ is a
direct sum of two subspaces $ V_{1},V_{2}$, mutually orthogonal with
respect to the symplectic form. That is every vector $v$ is uniquely
decomposed as $v_{1}+v_{2}$ with  $v_{1}\in V_{1},v_{2}\in V_{2},$
and $v_{1}^{T}\sigma v_{2}=0.$

Indeed, put $v_{1}=X^Tv,v_{2}=(\id -X^T)v,$ then $v=v_{1}+v_{2}$ and
$Yv_{1}=0$. By condition \eqref{eq:XYconstraint} we get
\begin{equation*}
v_{2}^{T}Yv_{2}=(v_{1}+iv_{2})^{\ast }Y(v_{1}+iv_{2})\geq
2v_{1}^{T}\sigma v_{2}
\end{equation*}
for all $v_{1}\in V_{1},v_{2}\in V_{2},$ hence $v_{1}^{T}\sigma
v_{2}=0$.

From (\ref {eq:XYconstraint}) we also obtain
\begin{equation*}
Y=(\id-X)Y(\id-X)^T\geq i(\id-X)\sigma (\id-X)^T \, .
\end{equation*}
Applying symplectic diagonalization of the positive symmetric matrix
$ (\id-X)Y(\id-X)^T$ in $V_{2}$, one can always find a symplectic
matrix $S$ in $\left( R^{2n},\sigma \right) $ satisfying
(\ref{eq:can}).

Since $XY=\nul$, we also have $SXS^{-1}SYS^T=\nul$ and
\eqref{eq:x-diag} follows.
\end{proof}

\section{Gauge-covariant channels}\label{sec:gauge}

We say that a Gaussian channel $(X,Y)$ is \emph{gauge-covariant} if
$[X,\sigma]=[Y,\sigma]=0$.  We denote by $\gauge$ the set of all
gauge-covariant Gaussian channels. It is clearly a subsemigroup of
$\gauss$. Physically, gauge-covariant channels arise for instance
from a number conserving (i.e., passive) coupling to an environment
(cf. \cite{Wolf08}).

Let us rearrange the matrix $\sigma $ such that
\[
\sigma =\left(
\begin{array}{cc}
0 & \id_n \\
-\id_n & 0
\end{array}
\right) ,
\]
then matrices $M\in M_{2n}(\mathbb{C})$ commuting with $\sigma $ are
those of the form
\begin{equation}
M=\left(
\begin{array}{cc}
A & B \\
-B & A
\end{array}
\right) ,  \label{Eq:M}
\end{equation}

where $A,B\in M_{n}(\mathbb{C})$.
The maps $M\mapsto A\pm iB$ are easily seen to be $\ast -$homomorphisms of $M_{2n}(\mathbb{C})$
onto $M_{n}(\mathbb{ C})$, hence $M\geq 0$ implies $A\pm iB\geq 0$.
Note that $A^{\ast }=A,B^{\ast }=-B$ in this case. Let us show that
conversely, $M\geq 0$ if $A\pm iB\geq 0. $ Let $A\pm iB\geq 0,$ then
$A\geq 0$ and $B^{\ast }=-B.$ Assume first that $A $ is
nondegenerate, then $\id_n\pm iA^{-1/2}BA^{-1/2}\geq 0,$ which
implies $ \id_n+A^{-1/2}BA^{-1}BA^{-1/2}\geq 0$ and hence
$A+BA^{-1}B\geq 0$ which implies $M\geq 0.$ The case of degenerate
$A$ is obtained by approximation. Thus we have proved

\begin{lemma}
$M\geq 0$ iff $A\pm iB\geq 0$.
\end{lemma}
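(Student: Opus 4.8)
The plan is to prove the two implications separately, with the direction "$M \geq 0 \Rightarrow A\pm iB\geq 0$" being essentially immediate from the $*$-homomorphism structure, and the converse "$A\pm iB\geq 0 \Rightarrow M\geq 0$" requiring a short computation that reduces the block-matrix positivity to a scalar-algebra inequality. Throughout I use the rearranged form of $\sigma$ and the parametrization \eqref{Eq:M} of matrices commuting with it.

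For the forward direction, I would observe that the maps $\Phi_\pm: M \mapsto A\pm iB$ are $*$-homomorphisms of $M_{2n}(\complex)$ onto $M_n(\complex)$ (this can be checked directly on the block form, or recognized as the two isotypic projections of the commutant of $\sigma$, which is isomorphic to $M_n(\complex)\oplus M_n(\complex)$). Since $*$-homomorphisms of $C^*$-algebras are positive, $M\geq 0$ immediately gives $\Phi_\pm(M) = A\pm iB\geq 0$. I would also record the structural facts $A^*=A$, $B^*=-B$, which follow from $M^* = M$ together with the explicit block form.

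For the converse, assume $A\pm iB\geq 0$. Then in particular $A = \half\big((A+iB)+(A-iB)\big)\geq 0$ and $B^* = -B$. I would first treat the case where $A$ is invertible: conjugating $A\pm iB$ by $A^{-1/2}$ preserves positivity, so $\id_n \pm i A^{-1/2}BA^{-1/2}\geq 0$; writing $C := A^{-1/2}BA^{-1/2}$ (which is anti-Hermitian), $\id_n \pm iC\geq 0$ forces $\|C\|\leq 1$, hence $\id_n + C^2 = (\id_n+iC)(\id_n - iC)\geq 0$ — equivalently $\id_n + A^{-1/2}BA^{-1}BA^{-1/2}\geq 0$, i.e.\ $A + BA^{-1}B\geq 0$ after conjugating back by $A^{1/2}$. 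Now $M$ decomposes in the $2\times 2$ block form as a Schur-complement-type factorization: since $A>0$,
\[
M = \left(\begin{array}{cc} A & B \\ -B & A \end{array}\right)
= \left(\begin{array}{cc} \id_n & 0 \\ -BA^{-1} & \id_n \end{array}\right)
\left(\begin{array}{cc} A & 0 \\ 0 & A + BA^{-1}B \end{array}\right)
\left(\begin{array}{cc} \id_n & A^{-1}B \\ 0 & \id_n \end{array}\right),
\]
where I have used $B^* = -B$ so that the left factor is the conjugate-transpose of the right factor; since both diagonal blocks $A$ and $A+BA^{-1}B$ are positive, $M\geq 0$ follows. Finally, for degenerate $A$ I would replace $A$ by $A+\delta\id_n$ with $\delta>0$: then $(A+\delta\id_n)\pm iB\geq 0$ still holds, the invertible case gives $M+\delta\id_{2n}\geq 0$, and letting $\delta\to 0$ gives $M\geq 0$ by closedness of the positive cone.

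I expect the only mildly delicate point to be the Schur-complement bookkeeping — in particular making sure the sign conventions ($B^* = -B$, and the $-B$ in the lower-left block of $M$) conspire so that the outer factors are genuinely adjoints of one another, which is exactly what is needed to conclude positivity of $M$ from positivity of the two diagonal blocks. Everything else is routine: the $*$-homomorphism property of $\Phi_\pm$, the scalar inequality $\|C\|\leq 1 \Rightarrow \id_n + C^2\geq 0$, and the limiting argument for degenerate $A$.
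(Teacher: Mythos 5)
Your proof is correct and follows essentially the same route as the paper: positivity of $A\pm iB$ via the $\ast$-homomorphisms $M\mapsto A\pm iB$, and for the converse the chain $\id_n\pm iA^{-1/2}BA^{-1/2}\geq 0\Rightarrow A+BA^{-1}B\geq 0\Rightarrow M\geq 0$ for invertible $A$, with the degenerate case handled by approximation. You merely make explicit two steps the paper leaves implicit --- the Schur-complement factorization behind ``$A+BA^{-1}B\geq 0$ implies $M\geq 0$'' and the $A\mapsto A+\delta\id_n$ perturbation --- and both are carried out correctly.
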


For a matrix $M$ of the form (\ref{Eq:M}) we denote $\hat{M}=A+iB.$
Then $\hat{ \sigma}=i\id_n$ and the cp-condition
\eqref{eq:XYconstraint} for gauge-covariant channels takes the form
\begin{equation}
\hat{Y}\geq \pm \left( \id_n-\hat{X}\hat{X}^{\ast }\right) .
\label{Eq:pdg}
\end{equation}
Let $\hat{X}^{\ast }=\hat{U}\hat{K}$ be the polar decomposition
of the matrix $\hat{X}^{\ast }$, where $\hat{U}$ is unitary and
$\hat{K}=\sqrt{\hat{X}\hat{X} ^{\ast }}$ is positive. Then the channel $(X,Y)$ is a concatenation of
the reversible channel $(U,0)$ and the channel $(K,Y).$ For the last
channel the condition (\ref{Eq:pdg}) takes the form

\begin{equation}
\hat{Y}\geq \pm \left( \id_n-\hat{K}^{2}\right) .  \label{Eq:pdg1}
\end{equation}

There are several basic cases depending on the properties of
$\hat{K}$:

\begin{itemize}
\item[(i)] $\hat{K}=0.$ The channel is idempotent for any $\hat{Y}\geq \id_n$.
(State preparation with the covariance matrix $Y$).

\item[(ii)] $0<\hat{K}<\id_n,$ where by strict inequality we mean that the
eigenvalues $k_{j}$ of $\hat{K}$ satisfy $0<k_{j}<1.$ The channel is a
member of a one-parameter semigroup of Gaussian channels with bounded
noise term having an invariant state. The semigroup is defined via the
relations
$\hat{X}_{t}=\hat{K}^{t},\hat{Y}_{t}=\hat{\mathcal{Y}}-\hat{K}^{t}
\hat{\mathcal{Y}}\hat{K}^{t},$ where $\hat{\mathcal{Y}}$ is the
unique solution of the equation
\[
\hat{Y}=\mathcal{Y-}\hat{K}\hat{\mathcal{Y}}\hat{K}
\]
corresponding to the covariance matrix of the invariant state.
Indeed, the last equation written in the basis of eigenvectors of
$\hat{K}$ has unique solution
\begin{equation}
\hat{\mathcal{Y}}=\left( \nu _{ij}\right) ,\quad \nu _{ij}=\left(
1-k_{i }k_{j}\right) ^{-1}y_{ij},\quad \hat{Y}=\left( y_{ij}\right)
. \label{Eq:solution}
\end{equation}
Let us prove that $\hat{\mathcal{Y}}\geq \id_n$ and hence it
corresponds to the covariance matrix of a Gaussian state. The matrix
with the elements
\begin{equation}
\left( 1-k_{i}k_{j}\right) ^{-1}=\int_{0}^{\infty }\exp t\left(
k_{i}k_{j}-1\right) dt  \label{Eq:ki}
\end{equation}
is positive since the matrix $\left( k_{i
}k_{j}-1\right) $ is conditionally positive (see Thm. 6.3.6. in \cite{HoJoTopics}). By using the
condition $\hat{Y}\geq \id_n-\hat{K}^{2}$ and a lemma concerning
Schur products of positive matrices (Thm. 5.2.1. in \cite{HoJoTopics}), we have
\[
\hat{\mathcal{Y}}=\left( \left( 1-k_{i}k_{j}\right)
^{-1}y_{ij}\right) \geq \left( \left( 1-k_{i}k_{j}\right)
^{-1}\left( 1-k_{i}^{2}\right) \delta _{ij}\right) =\id_n.
\]

\item[(iii)] $\hat{K}=\id_n.$ Then $\hat{Y}\geq 0$ and the channel is member of
one-parameter semigroup of Gaussian channels $\left( 1,tY\right) $
with unbounded noise term. In the case $\hat{Y}=0$ this is identity
channel.

\item[(iv)] $\hat{K}>\id_n,$ that is $k_{j}>1.$ The channel is a member of
one-parameter semigroup of Gaussian channels with
$\hat{\mathcal{Y}}$ defined as in (\ref{Eq:solution}). However in
this case instead of (\ref{Eq:ki}) we must use
\begin{equation}
\left( 1-k_{i}k_{j}\right) ^{-1}=-\int_{0}^{\infty }\exp t\left(
1-k_{i}k_{j}\right) dt, \label{Eq:kin}
\end{equation}
therefore the matrix with the elements (\ref{Eq:kin}) is negative
implying $ \hat{\mathcal{Y}}\leq -\id_n.$ Thus the semigroup has
unbounded noise term and there is no invariant state.
\end{itemize}

In general, one can decompose $\hat{K}$ into direct orthogonal sum
of the matrices satisfying the conditions (i)-(iv). In case
$\hat{Y}$ commutes with $\hat{K}$ one can further decompose the
channel $(K,Y)$ into corresponding channels. In particular, in the
case $n=1$ any gauge-invariant channel reduces to one of the cases
(i)-(iv).

\section{Conclusions and open questions}\label{sec:open}

We conclude with some open questions. First of all, we lack a
characterization  of infinitesimal divisible Gaussian channels.
 Prop.\ref{prop:weakconv} provides a partial
answer in terms of the determinant $\det(X)$. A similar property,
however, turned out to be necessary but not sufficient in the case
of finite dimensional quantum channels \cite{WoCi08}. Moreover, we left open the question which type of dynamical equations (e.g., time-dependent Markovian master equations) leads to solutions which coincide with the set of infinitesimal divisible channels.

For one-parameter semigroups our picture is more complete. Yet,
there is no efficiently decidable criterion which enables us to say
whether or not a given Gaussian channel is an element of such a
one-parameter semigroup. The simple form \eqref{eq:one-param-simple}
suggests to follow the lines of \cite{WoEiCuCi08,CuEiWo09} where an integer
semi-definite program provided a solution for finite dimensional
quantum channels. However, boundary cases (e.g., channels not
admitting a simple form representation) will have to be treated with
care.

Other questions arise when we slightly change the rules of the game.
In the reversible case we saw that while a transformation might not
be an element of a one-parameter semigroup within the Gaussian
world, it can become one if we drop the restriction to the Gaussian
world. So how does the general picture change if we allow for
decompositions into arbitrary channels?

In a similar vein we may allow for tensor powers and thereby
investigate the  robustness of all the discussed properties w.r.t.
taking several copies of a quantum channel. This might be
interesting beyond Gaussian channels (e.g., for qubit maps) as well.
In the Gaussian case we can easily find examples showing that things
can change: take a reversible Gaussian channel with
$S\notin\exp(\symp)$, then $S\oplus S$ happens to have a Hamiltonian
matrix as a generator.

Finally, it is desirable to relate semigroup properties of a quantum
channel to other properties such as their capacities or to
properties of quantum spin chains to which the channels can be
assigned to via the finitely correlated state construction
\cite{FaNaWe94}, \cite{PeVeWoCi07}.

\section*{Acknowledgements}

T.H. and M.W. acknowledges support by QUANTOP, the Danish
Natural Science Research Council (FNU) and the EU projects QUEVADIS and COQUIT. A.H. acknowledges
support from RFBR grant 09-01-00424 and the RAS program
``Mathematical control theory''.

\end{document}